\definecolor{darkred}{rgb}{0.8,0.1,0.1}
\definecolor{lightblue}{rgb}{0.1,0.1,0.8}
\definecolor{myhlcolor}{rgb}{1, 1, 0}
\definecolor{myhlcolortwo}{rgb}{1, 1, 0}
\definecolor{myhlcolor}{rgb}{1, 1, 1}
\newtheorem{theorem}{Theorem}
\newtheorem{proposition}[theorem]{Proposition}
\definecolor{darkblue}{RGB}{0,76,156}
\definecolor{darkkblue}{RGB}{0,0,153}
\definecolor{blue2}{RGB}{102,178,255}
\def\endenv{\ifmmode\;\else{\unskip\nobreak\hfil
\penalty50\hskip1em\null\nobreak\hfil\;
\parfillskip=0pt\finalhyphendemerits=0\endgraf}\fi}
\mathchardef\ordinarycolon\mathcode`\:
\def\vcentcolon{\mathrel{\mathop\ordinarycolon}}
\def\resetMathstrut@{%
  \setbox\z@\hbox{%
    \mathchardef\@tempa\mathcode`\[\relax
    \def\@tempb##1"##2##3{\the\textfont"##3\char"}%
    \expandafter\@tempb\meaning\@tempa \relax
  }%
  \ht\Mathstrutbox@\ht\z@ \dp\Mathstrutbox@\dp\z@}
\newcommand{\nc}{\newcommand}
\nc{\rnc}{\renewcommand}
\nc{\beg}{\begin{equation}}
\nc{\eeq}{\end{equation}}
\nc{\beqa}{\begin{eqnarray}}
\nc{\eeqa}{\end{eqnarray}}
\nc{\lbar}[1]{\overline{#1}}
\nc{\ketbra}[2]{|#1\rangle\!\langle#2|}
\nc{\avg}[1]{\langle#1\rangle}
\nc{\Rank}{\operatorname{Rank}}
\nc{\smfrac}[2]{\mbox{$\frac{#1}{#2}$}}
\nc{\tr}{\operatorname{Tr}}
\nc{\ox}{\otimes}
\nc{\dg}{\dagger}
\nc{\dn}{\downarrow}
\nc{\cA}{{\cal A}}
\nc{\cB}{{\cal B}}
\nc{\cC}{{\cal C}}
\nc{\cD}{{\cal D}}
\nc{\cE}{{\cal E}}
\nc{\cF}{{\cal F}}
\nc{\cG}{{\cal G}}
\nc{\cH}{{\cal H}}
\nc{\cI}{{\cal I}}
\nc{\cJ}{{\cal J}}
\nc{\cK}{{\cal K}}
\nc{\cL}{{\cal L}}
\nc{\cM}{{\cal M}}
\nc{\cN}{{\cal N}}
\nc{\cO}{{\cal O}}
\nc{\cP}{{\cal P}}
\nc{\cQ}{{\cal Q}}
\nc{\cR}{{\cal R}}
\nc{\cS}{{\cal S}}
\nc{\cT}{{\cal T}}
\nc{\cV}{{\cal V}}
\nc{\cU}{{\cal U}}
\nc{\cX}{{\cal X}}
\nc{\cY}{{\cal Y}}
\nc{\cZ}{{\cal Z}}
\nc{\cW}{{\cal W}}
\nc{\csupp}{{\operatorname{csupp}}}
\nc{\qsupp}{{\operatorname{qsupp}}}
\nc{\var}{{\operatorname{var}}}
\nc{\rar}{\rightarrow}
\nc{\lrar}{\longrightarrow}
\nc{\polylog}{{\operatorname{polylog}}}
\nc{\wt}{{\operatorname{wt}}}
\nc{\av}[1]{{\left\langle {#1} \right\rangle}}
\nc{\supp}{{\operatorname{supp}}}
\def\x{\xi}
\nc{\RR}{{{\mathbb R}}}
\nc{\CC}{{{\mathbb C}}}
\nc{\FF}{{{\mathbb F}}}
\nc{\NN}{{{\mathbb N}}}
\nc{\ZZ}{{{\mathbb Z}}}
\nc{\PP}{{{\mathbb P}}}
\nc{\QQ}{{{\mathbb Q}}}
\nc{\UU}{{{\mathbb U}}}
\nc{\EE}{{{\mathbb E}}}
\nc{\CHSH}{{\operatorname{CHSH}}}
\nc{\be}{\begin{equation}}
\nc{\ee}{{\end{equation}}}
\nc{\bea}{\begin{eqnarray}}
\nc{\eea}{\end{eqnarray}}
\nc{\Hom}[2]{\mbox{Hom}(\CC^{#1},\CC^{#2})}
\nc{\rU}{\mbox{U}}
\nc{\ob}[1]{#1}
\nc{\SEP}{{\text{SEP}}}
\nc{\NS}{{\text{NS}}}
\nc{\LOCC}{{\text{LOCC}}}
\nc{\PPT}{{\text{PPT}}}
\nc{\EXT}{{\text{EXT}}}
\nc{\Sym}{{\operatorname{Sym}}}
\nc{\ERLO}{{E_{\text{r,LO}}}}
\nc{\ERLOCC}{{E_{\text{r,LOCC}}}}
\nc{\ERPPT}{{E_{\text{r,PPT}}}}
\nc{\ERLOCCinfty}{{E^{\infty}_{\text{r,LOCC}}}}
\nc{\Aram}{{\operatorname{\sf A}}}
\def\grd@save@target#1{%
  \def\grd@target{#1}}
\def\grd@save@start#1{%
  \def\grd@start{#1}}
\tikzset{
  grid with coordinates/.style={
    to path={%
      \pgfextra{%
        \edef\grd@@target{(\tikztotarget)}%
        \tikz@scan@one@point\grd@save@target\grd@@target\relax
        \edef\grd@@start{(\tikztostart)}%
        \tikz@scan@one@point\grd@save@start\grd@@start\relax
        \draw[minor help lines,magenta] (\tikztostart) grid (\tikztotarget);
        \draw[major help lines] (\tikztostart) grid (\tikztotarget);
        \grd@start
        \pgfmathsetmacro{\grd@xa}{\the\pgf@x/1cm}
        \pgfmathsetmacro{\grd@ya}{\the\pgf@y/1cm}
        \grd@target
        \pgfmathsetmacro{\grd@xb}{\the\pgf@x/1cm}
        \pgfmathsetmacro{\grd@yb}{\the\pgf@y/1cm}
        \pgfmathsetmacro{\grd@xc}{\grd@xa + \pgfkeysvalueof{/tikz/grid with coordinates/major step}}
        \pgfmathsetmacro{\grd@yc}{\grd@ya + \pgfkeysvalueof{/tikz/grid with coordinates/major step}}
        \foreach \x in {\grd@xa,\grd@xc,...,\grd@xb}
        \node[anchor=north] at (\x,\grd@ya) {\pgfmathprintnumber{\x}};
        \foreach \y in {\grd@ya,\grd@yc,...,\grd@yb}
        \node[anchor=east] at (\grd@xa,\y) {\pgfmathprintnumber{\y}};
      }
    }
  },
  minor help lines/.style={
    help lines,
    step=\pgfkeysvalueof{/tikz/grid with coordinates/minor step}
  },
  major help lines/.style={
    help lines,
    line width=\pgfkeysvalueof{/tikz/grid with coordinates/major line width},
    step=\pgfkeysvalueof{/tikz/grid with coordinates/major step}
  },
  grid with coordinates/.cd,
  minor step/.initial=.2,
  major step/.initial=1,
  major line width/.initial=2pt,
}
\tikzset{
  treenode/.style = {align=center, inner sep=0pt, text centered,
    font=\sffamily},
  arn_n/.style = {treenode, circle, white, font=\sffamily\bfseries, draw=black,
    fill=black, text width=1.5em},
  arn_r/.style = {treenode, circle, red, draw=red, 
    text width=1.5em, very thick},
  arn_x/.style = {treenode, rectangle, draw=black,
    minimum width=0.5em, minimum height=0.5em}
}
\nc{\MIO}{{\text{\rm MIO}}}
\nc{\DIO}{{\text{\rm DIO}}}
\nc{\SIO}{{\text{\rm SIO}}}
\nc{\IO}{{\text{\rm IO}}}
\let\oldproofname\proofname
\renewcommand{\proofname}{\rm\bf{\oldproofname}}
\renewenvironment{proof}[1][\proofname]{%
  \vspace{-\topsep}
  \pushQED{\qed}
  \normalfont
  \topsep6\p@\@plus6\p@\relax
  \trivlist\item[\hskip\labelsep\bfseries#1\@addpunct{.}]\ignorespaces}{\popQED\endtrivlist\@endpefalse}
\begin{document}
\nocite{*} 
\title{Weight-Based Measure of Quantum Memory as a Universal and Operational Benchmark}
\author{Jinghang Zhang}
\affiliation{School of Artificial Intelligence and Computer Science, Shaanxi Normal University, Xi'an, 710062, China}
\author{Yu Luo}
\email{penroseluoyu@gmail.com}
\affiliation{School of Artificial Intelligence and Computer Science, Shaanxi Normal University, Xi'an, 710062, China}
\begin{abstract}
Quantum memory plays a critical role in quantum communication, sensing, and computation. However, studies on quantum memory under a unified benchmarking framework remain scarce. In this paper, we propose a weight-based quantifier as a benchmarking method to evaluate the performance advantage of quantum memory in nonlocal exclusion tasks. We establish a general lower bound for the weight-based measure of quantum memory. Moreover, this measure provides fundamental theoretical bounds for transforming a general channel into an ideal quantum memory. Finally, we present explicit calculations of the weight-based quantifier for various channels, including unitary channels, depolarizing channels, maximal replacement channels, stochastic damping channels, and erasure channels.
\end{abstract}
\date{\today}
\pacs{03.67.a, 03.65.Ud, 03.65.Ta}
\maketitle
\section{Introduction} 
Memory is a fundamental component of information processing, playing a vital role in both classical and quantum computing~\cite{nicolas2014quantum,freer2017single,wang2019efficient,hosseini2011high,hsiao2018highly,vernaz2018highly,lvovsky2009optical,goronkin2004high,vieira2024entanglement,simnacher2019certifying,yuan2021universal,chang2024visually}. The primary function of a quantum memory is to preserve an input quantum state with minimal loss over a designated period of time. However, due to the existence of quantum superposition, quantum memory is fundamentally distinct from its classical counterpart. For instance, while classical memory can store the outcomes of measurements on orthogonal basis states such as $\{\ket{0}, \ket{1}\}$, the act of measurement in quantum systems collapses the state, thereby eliminating any information about the original quantum superposition. As a result, it becomes impossible to distinguish between superposition states such as $(\ket{0} + \ket{1})/\sqrt{2}$ and $(\ket{0} - \ket{1})/\sqrt{2}$, which yield identical statistical outcomes under measurements in the computational basis~\cite{nielsen2010quantum,wootters1982single}. 
This fundamental distinction highlights the necessity of developing operationally meaningful methods to evaluate and compare the performance of quantum memories across diverse physical implementations and use cases. However, unified benchmarking frameworks and quantification methods for evaluating the performance of quantum memories remain relatively scarce~\cite{heshami2016quantum,lvovsky2009optical,yuan2021universal,chang2024visually}. 

Recently, the authors of Ref.~\cite{yuan2021universal,ku2022quantifying} proposed treating quantum memory as a physical resource and investigated its robustness. They demonstrated that this resource-based approach exhibits broad applicability across multiple scenarios, including memory synthesis, classical simulation overhead, and channel discrimination
tasks, making it a promising candidate for a universal benchmarking framework.

A natural question that arises is whether there exist other benchmarks for characterizing the memory of a channel.
In this work, we focus on a specific quantification method for quantum memory — the weight-based measure~\cite{uola2020all,ducuara2020operational,bu2018asymmetry,streltsov2010linking}. The weight-based measure evaluates the quantum resources of a target quantum memory by calculating the minimum amount of noise required to degrade it into a "free memory" incapable of quantum storage. By definition, an identity channel serves as the benchmark for perfect quantum memory as it perfectly preserves all quantum information. Conversely, we consider \textit{entanglement-breaking} (\textrm{EB}) channels to be entirely incapable of storing quantum information, which is regarded as "free memory"~\cite{tabia2024super,vieira2024entanglement,buscemi2011entanglement,horodecki2003entanglement}. We find that this measure demonstrates significant operational advantages in non-local exclusion tasks. Specifically, the weight of a quantum memory is directly linked to its ability to surpass classical strategies in such tasks~\cite{uola2020all}. This is crucial for the success of specific quantum information processing protocols, such as tasks requiring precise coordination or information authentication in secure communication or distributed computing scenarios~\cite{brassard2005quantum,gottesman2001quantum,bennett2014quantum,pirandola2020advances,10236453,11131294,jing2025circuit,piveteau2025circuit,piveteau2025simulating}. This advantage in mutually exclusive tasks forms one of the core focuses of our research, as it directly links an abstract measure to practical quantum information processing capabilities.

This study further reveals a series of important properties of the weight-based measure, including its monotonicity and convexity. We demonstrate that this measure is associated with quantum state optimization problems, and this optimization can be achieved by analyzing the channel's Choi state. Furthermore, by introducing non-local exclusion tasks, we showcase the operational meaning of this measure. We also examine the unavoidable lower bounds of error arising from free super-channel constraints when distilling arbitrary noisy channels into ideal unitary or replacement channels, thereby revealing a close connection between the weight-based measure of quantum memory and the performance of these purification tasks. Besides, we establish a general lower bound for the weight-based measure of quantum memory. Finally, by calculating the weights for some representative quantum memory channels (including unitary channels, depolarizing channels, maximal replacement channels, stochastic damping
channels, and erasure channels), we find that the weight-based measure can serve as an effective, albeit relatively coarse-grained, performance evaluation metric for quantum memory, providing valuable references for future quantum memory design and optimization.

This paper is organized as follows. In Section \ref{sec:Preliminary}, we first introduce the necessary notation and settings we need. In Section \ref{sec:mainresults}, we will introduce the weight-based measure of quantum memory and introduce the  quantum games to test the power of a memory and evaluate the performance advantage of quantum memory in nonlocal exclusion tasks. Besides, we establish a general lower bound for the weight-based measure of quantum
memory. In Section \ref{sec:fundamental}, we introduce fundamental theoretical bounds for transforming a general channel into an ideal quantum
memory. In Section \ref{sec:examples}, we compute the weight-based measure for several example quantum channels. Then, we summarize our results in Section \ref{sec:conclusion}.
\section{Preliminaries} \label{sec:Preliminary}

\subsection{Notations}
Throughout this paper, we adopt most of the notations used in Refs.~\cite{gour2019comparison,luo2025one,chitambar2019quantum,luo2022coherence}. All Hilbert spaces $\mathscr{H}$ considered are finite dimensional. We will denote all the dynamical systems and their corresponding Hilbert spaces by $A,B$, etc., and all the static systems and their corresponding Hilbert spaces by $A_0,A_1,B_0,B_1$, etc. 
The set of all bounded operators acting on system $A_1$ is denoted by $\mathrm{B}(A_1)$, and the set of all density operators on $A_1$ is denoted by $\mathrm{D}(A_1)$. Throughout this work, density operators will be represented by lowercase Greek letters such as $\rho$, $\sigma$, and $\tau$. We use the notation $A_1B_1$ to represent a composite system $A_1\otimes B_1$, which denote the dimension of a system $A_1$ as $|A_1|$ and the dimension of a system $B_1$ as $|B_1|$. The notion $\tilde{A}$ denotes a system with the same dimension of $A$, the dimension of $A$ is generally denoted by $d$ unless otherwise specified. A pair of systems such that $A := (A_0 \to A_1)$ where $A_0$ and $A_1$ represent the input and output systems, respectively. Moreover, the set of all linear maps from $\textrm B(A_0)$ to $\textrm B(A_1)$ will be denoted as $\textrm{L}(A):=\textrm{L}(A_0\to A_1)$, among which all completely-positive and trace-preserving maps ($\operatorname{CPTP}$) are denoted as $\operatorname{CPTP}(A_0\to A_1)$. A $\operatorname{CPTP}$ map is also called a quantum channel. The identity channel in $\operatorname{CPTP}(A_0\to A_1)$ will be denoted by $\textrm{id}_{A_0}$. We will use calligraphic letters (\( \mathcal{E}, \mathcal{M}, \mathcal{N}, \mathcal{U}, \mathcal{V} \), etc.) to denote quantum channels. 
For any channel $\mathcal{E}_{A_0\to A_1}\in \operatorname{CPTP}(A_0\to A_1)$, the corresponding Choi state is defined as $\mathcal{J}^{\mathcal{E}_{A_0 \to A_1}}_{A_0 A_1} = \mathrm{id}_{A_0} \otimes \mathcal{E}_{A_0\to A_1}(\Psi^+_{A_0 \tilde{A}_0})$, where $\Psi^+_{A_0 \tilde{A}_0} = \frac{1}{d}\sum_{i,j=1}^{d} |ii\rangle\langle jj|_{A_0 \tilde{A}_0}$ is the normalized maximally entangled states for a bipartite system $A_0\tilde{A_0}$, and $d$ is the dimension of system $A_0$. The action of a quantum channel will usually be denoted by parentheses, as in \( \mathcal{N}(\rho) \). We use $\langle A, B\rangle=\operatorname{Tr}(A^{\dagger} B)$ for the Hilbert-Schmidt inner product between operators. 


We will use capital Greek letters like $\Omega,\Theta,\Lambda$ to denote super-operations (supermaps). The action of a super-operations will be represented with square brackets, as in \( \Omega[\mathcal{N}] \) and \( \Omega[\mathcal{E}] \). Superchannels are special super-operations that transform a quantum channel into another quantum channel. A superchannel $\Lambda$ acts on a channel $\mathcal{N}^{A_0 \to A_1}$ to simulate anther channel $\mathcal{M}^{B_0 \to B_1}$. This process can be described as:
\begin{equation}
\Lambda\left[\mathcal{N}^{A_0 \to A_1}\right]=\mathcal{V}^{A_1 E \to B_1} \circ\left(\mathcal{N}^{A_0 \to A_1} \otimes \operatorname{id}^{E}\right) \circ \mathcal{U}^{B_0 \to E A_0},
\end{equation}
where $\mathcal{U}^{B_0 \to E A_0 }$ and $\mathcal{V}^{A_1 E \to B_1}$ are pre-processing and post-processing channels, $\operatorname{id}^E$ is the identity channel on an ancillary system $E$.

As demonstrated in Ref.~\cite{gour2019comparison}, a supermap $\Theta$ is a superchannel if and only if the $\operatorname{CPTP}$ map $\Gamma^{ B_0A_1 \rightarrow 
A_0B_1}$ which corresponds to the Choi matrix
\begin{equation}
\mathcal{J}_{\Theta}=\left(\operatorname{id}^{B_0A_1} \otimes \Gamma_{\Theta}^{B_0A_1 \rightarrow A_0B_1} \right)\left(\Phi_{+}^{B_0 B_0'} \otimes \Phi_{+}^{A_1 A_1'}\right),
\end{equation}
can be expressed as
\begin{equation}
\Gamma_{\Theta}^{B_0A_1 \rightarrow 
A_0B_1} = \left( \textrm{id}^{A_0} \otimes \mathcal{V}^{A_1 E \to B_1} \right) \circ ( \text{id}^{A_1} \otimes \mathcal{U}^{B_0 \to E A_0 }). 
\end{equation}
Here, $\Gamma_{\Theta}^{B_0A_1 \rightarrow 
A_0B_1}$ is the equivalent quantum channel that takes inputs from systems $B_0$ and $A_1$ and produces outputs on systems $A_0$ and $B_1$. Its structure is a composition of the pre-processing channel $\mathcal{U}$ acting on the $B_0$ part and the post-processing channel $\mathcal{V}$ acting on the $A_1$ part (along with the ancilla E).
Its illustration is given in Fig.~\ref{fig:superchannel}.

As shown in Ref.~\cite{saxena2020dynamical}, any superchannel $\Theta^{A\to B}$ has a Kraus decomposition, i.e., an operator sum representation, 
\begin{equation}
    \Theta_{A\to B} = \sum_{x=1}^n\Theta^x_{A\to B},
\end{equation} 
where the Choi matrix of each $\Theta^x$ has rank one, we use this property to prove monotonicity below.

We will omit the subscript indicating the system when the system on which a channel or superchannel acts is explicitly specified. For instance, we may write \(\mathcal{N}\) in place of \(\mathcal{N}^A \in \mathrm{CPTP}(A_0 \to A_1)\) or \(\mathcal{N}^{A_0 \to A_1} \in \mathrm{CPTP}(A_0 \to A_1)\), and \(\Theta\) instead of \(\Theta_{A \to B}\).


\begin{figure}[htbp]
    \centering
    \includegraphics[width=0.5\textwidth]{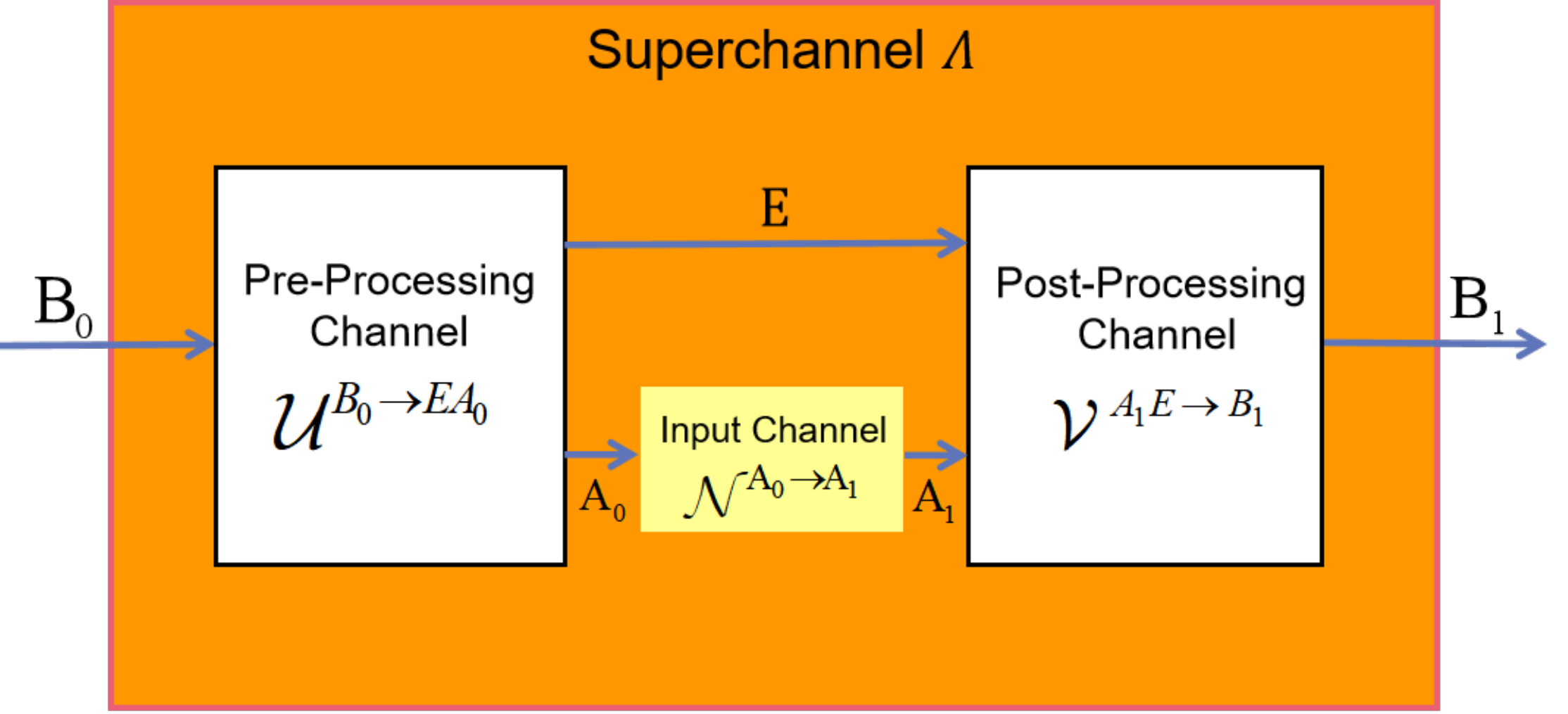} 
    \caption{Achieving a superchannel and its effect on the input channel through pre-processing and post-processing of the channel.}
    \label{fig:superchannel}
\end{figure}

\subsection{ Resource theories of quantum memory}\label{subsec3}
In this section, we will introduce the resource theory of quantum memories~\cite{yuan2021universal}. A resource theory of quantum memory is denoted as a 2-tuple $\mathfrak{R} = (\mathfrak{F}, \mathfrak{O})$, where $\mathfrak{F}$ is the set of free memories and $\mathfrak{O}$ is the set of free super-operations. The set of free memories $\mathfrak{F}$ comprises all channels incapable of quantum storage, namely the EB channels. The set of free super-operations $\mathfrak{O}$ consists of all super-operations that map free channels to free channels~\cite{liu2020operational,luo2022coherence,luo2024epsilon,luo2025one,gour2019comparison,liu2019resource}. Further details regarding $\mathfrak{F}$ and $\mathfrak{O}$ will be provided below.

Any functional $C: \operatorname{CPTP}\to \mathbb{R}^+$ is considered a valid measure for the resource theory of quantum memory, if it satisfies the following two basic requirements:
\begin{enumerate}
    \item[\textbf{[M1]}] (Non-negativity): $C(\mathcal{N}) \geq 0$, with equality holding if $\mathcal{N} \in \mathfrak{F}$;
    \item[\textbf{[M2]}] (Monotonicity): $C(\mathcal{N}) \geq C(\Theta[\mathcal{N}])$ for all free superchannels $\Theta \in \mathfrak{O}$.
\end{enumerate}


 Free memories are represented by $\textrm{EB}$ channels, which are also known as measure-and-prepare channels, as illustrated in Fig.~\ref{fig:EB_channel}. An $\textrm{EB}$ channel $\mathcal{N}^{A_0\to A_1}$ transforms a quantum state $\rho^{A_0}$ from system A to system B as follows:
\begin{equation}\label{eq:EB_channels}
\mathcal{N}^{A_0\to A_1}(\rho^{A_0})=\sum_i\operatorname{Tr}_{A_0}[\rho^{A_0}\operatorname{M}^{A_0}_i]\sigma^{A_1}_i,
\end{equation}
Here, $\{\operatorname{M}^{A_0}_i\}$ is a POVM satisfying $\operatorname{M}^{A_0}_i\ge 0$ and $\sum_i\operatorname{M}^{A_0}_i=I^{A_0}$.
EB channels are considered to be free resource because only classical information is stored and forwarded from system $A_0$ to system $A_1$ and they are mathematically synonymous with classical memories~\cite{yuan2021universal}.

Free super-operations are higher-order transformations that act on quantum channels. Within this framework, free super-operations are those that only transmit classical information~\cite{yuan2021universal}, 
\begin{equation}
    \Omega[\mathcal{N}^{A_0 \to A_1}] = \mathcal{V}^{A_1E \to B_1} \circ \mathcal{N}^{A_0 \to A_1} \circ \mathcal{E}_{deph}^E \circ \mathcal{U}^{B_0 \to EA_0}. 
\end{equation}
Here $\mathcal{U}^{B_0 \to EA_0}$ and $\mathcal{V}^{A_1E \to B_1}$ are arbitrary quantum channels and $\mathcal{E}_{deph}^E$ is a dephasing channel that enforces system $E$ to be classical.
For mathematical simplicity, free super-operations can be defined as resource non-generating super-operations.
This means that a super-operation $\Omega$ is considered free if and only if it maps any free channel back into the set of free channels~\cite{yuan2021universal}:
\begin{equation}
\mathfrak{O}=\{\Omega: \Omega[\mathcal{N}] \in \mathfrak{F}, \forall \mathcal{N} \in \mathfrak{F}\} .
\end{equation}
Furthermore, this restriction ensures that any valid resource measure must be non‐increasing under all $\Omega\in\mathfrak{O}$, thereby providing a consistent notion of monotonicity. By limiting our transformations to $\mathfrak{O}$, we guarantee that no entanglement resource can be generated in the course of processing.

\begin{figure}[htbp]
    \centering
    \includegraphics[width=0.4\textwidth]{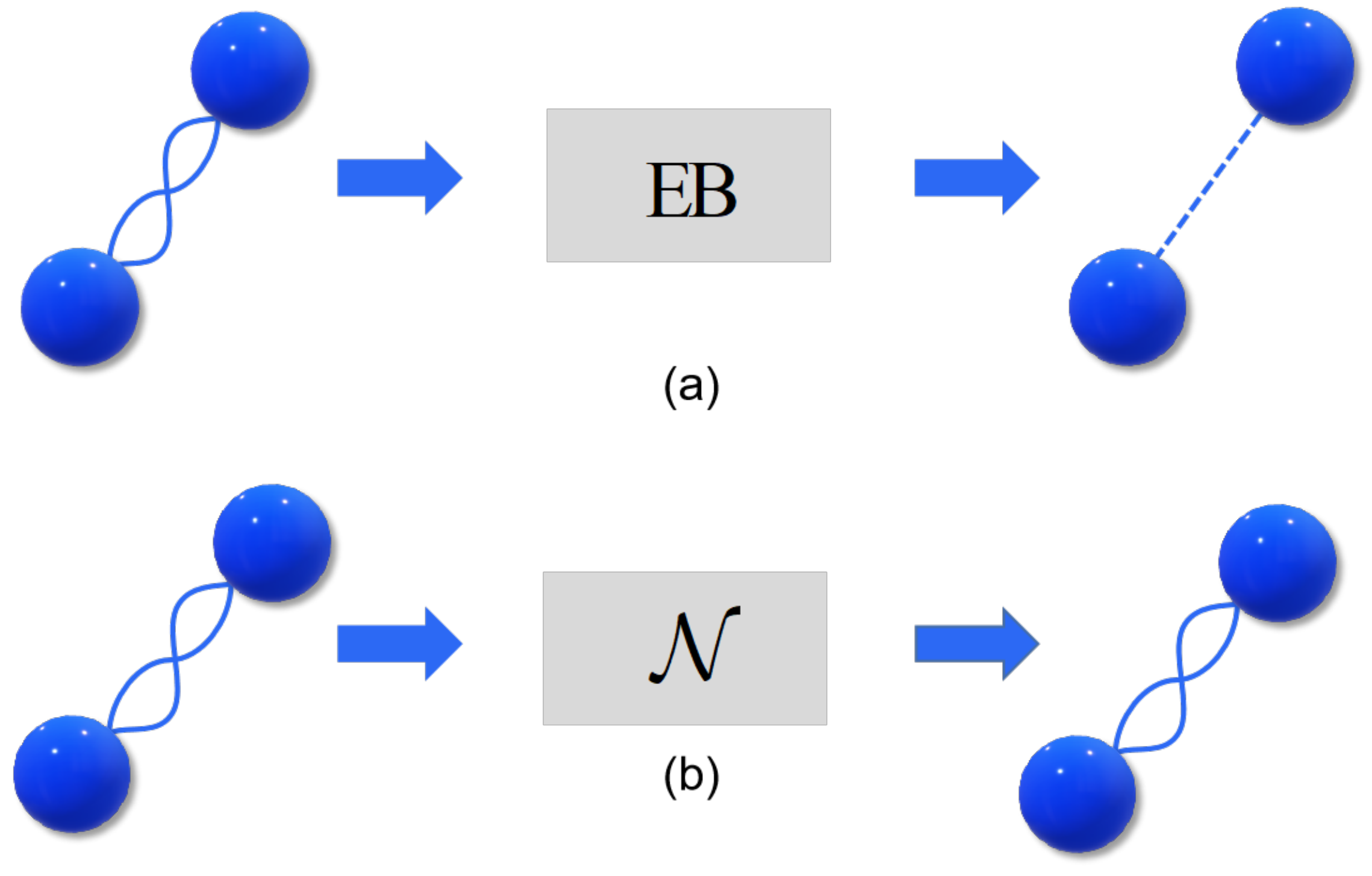} 
    \caption{Different quantum memory resource theories. (a) shows an EB memory, whose function is to eliminate all quantum entanglement, while (b) depicts an ideal memory that can losslessly preserve the original quantum state.}
    \label{fig:EB_channel}
\end{figure}

\section{Main Results} \label{sec:mainresults}

\subsection{Weight-based measure of quantum memory}
For a channel $\mathcal{N}$, the weight-based measure of quantum memory can be defined as
\begin{equation}\label{eq:eq1}
C_w(\mathcal{N}) = \min\{s \ge 0 : \mathcal{N} \ge (1-s)\mathcal{M}, \mathcal{M} \in \textrm{EB}\}.
\end{equation}

We can use the similar method mentioned in Ref.~\cite{wilde2020amortized,luo2022coherence}, the weight-based measure of channel in Eq.~\eqref{eq:eq1} can also be written via Choi state:
\begin{equation}\label{eq:eq2}
C_w(\mathcal{N}) = \min\{s \ge 0 : \mathcal{J}_{\mathcal{N}} \ge (1-s)\mathcal{J}_{\mathcal{M}}, \mathcal{M} \in \textrm{EB}\}.
\end{equation}

This can be equivalently recast as

\begin{equation}\label{eq:choi_state}
\begin{aligned}
C_w(\mathcal{N})= & \min s  \\
\text { s.t. } & \mathcal{J}_\mathcal{N} \ge (1-s)\mathcal{J}_{\mathcal{M}} ,\\
& \mathcal{M} \in \operatorname{EB},
\end{aligned}
\end{equation}
where we use $\mathcal{J}_\mathcal{M}=\frac{1}{d}\sum\ketbra{i}{i}\otimes\mathcal{M}(\ketbra{i}{i})$ denotes the Choi state of EB channel, which are separable states.

\begin{proposition}\label{pro:pro2}
The weight-based measure of quantum memory satisfies monotonicity, i.e.,
\begin{equation}
C_w(\mathcal{N}) \ge \sum_x{p_xC_w(\Theta_x[\mathcal{N}]/p_x)},
\end{equation}
where $\Theta=\sum_{x}\Theta^x_{A\to B}$ is a superchannel and $p_x=\frac{1}{|B_0|}Tr[\mathcal{J}_{\Theta_x[\mathcal{N}]}]$.
\end{proposition}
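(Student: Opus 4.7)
The strategy is to propagate the optimal EB witness of $\mathcal{N}$ through each Kraus component of the free superchannel $\Theta$, and then combine the resulting bounds via convex averaging. Let $s^{\ast}:=C_w(\mathcal{N})$, and by Eq.~(\ref{eq:eq2}) fix an optimal $\mathcal{M}\in\textrm{EB}$ with $\mathcal{J}_{\mathcal{N}}\ge(1-s^{\ast})\mathcal{J}_{\mathcal{M}}$. Each rank-one component $\Theta_x$ from the Kraus decomposition of \cite{saxena2020dynamical} has a positive semidefinite Choi matrix and therefore, viewed as a super-operation on CP maps, preserves the semidefinite order at the Choi level. Applying $\Theta_x$ to both sides yields
\begin{equation}
\Theta_x[\mathcal{N}]\;\ge\;(1-s^{\ast})\,\Theta_x[\mathcal{M}].
\end{equation}

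The next step is to fix the correct normalizations. Define $p_x=\tfrac{1}{|B_0|}\Tr[\mathcal{J}_{\Theta_x[\mathcal{N}]}]$ (as in the statement) and $q_x=\tfrac{1}{|B_0|}\Tr[\mathcal{J}_{\Theta_x[\mathcal{M}]}]$. Since $\Theta=\sum_x\Theta_x$ is a superchannel, $\Theta[\mathcal{N}]$ and $\Theta[\mathcal{M}]$ are both CPTP, so the traces sum to give $\sum_x p_x=\sum_x q_x=1$. The substantive claim is that $\mathcal{M}_x:=\Theta_x[\mathcal{M}]/q_x\in\textrm{EB}$ for every $x$. This is the step I expect to be the main obstacle: the overall freeness $\Theta[\mathcal{M}]\in\textrm{EB}$ is guaranteed by $\Theta\in\mathfrak{O}$, but one must argue that each rank-one piece \emph{individually} preserves the measure-and-prepare form of Eq.~(\ref{eq:EB_channels}). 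My plan is to write $\Theta_x$ explicitly via the Kraus components of the free pre- and post-processing channels $\mathcal{U}$ and $\mathcal{V}$ wrapping the classical ancilla, and to check directly that plugging a measure-and-prepare channel through this sandwich yields another (unnormalized) measure-and-prepare channel.

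Once selective EB-preservation is established, the conclusion is an algebraic repackaging. Dividing the first displayed inequality by $p_x$ gives $\mathcal{N}_x := \Theta_x[\mathcal{N}]/p_x \ge \bigl((1-s^{\ast})q_x/p_x\bigr)\mathcal{M}_x$, where $\mathcal{N}_x$ is CPTP by construction. Taking the trace of the Choi matrices of both sides forces $(1-s^{\ast})q_x/p_x\le 1$, so the definition of $C_w$ in Eq.~(\ref{eq:eq1}) yields
\begin{equation}
C_w(\mathcal{N}_x)\;\le\;1-\frac{(1-s^{\ast})\,q_x}{p_x}.
\end{equation}
Multiplying by $p_x$ and summing over $x$,
\begin{equation}
\sum_x p_x\,C_w(\mathcal{N}_x)\;\le\;\sum_x p_x-(1-s^{\ast})\sum_x q_x\;=\;1-(1-s^{\ast})\;=\;C_w(\mathcal{N}),
\end{equation}
which is exactly the claimed monotonicity.
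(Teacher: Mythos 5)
Your proposal is correct and follows essentially the same route as the paper's proof: both push the optimal EB decomposition of $\mathcal{J}_{\mathcal{N}}$ through the positive maps $\Pi_{\Theta_x}$ induced by the rank-one components, renormalize by $p_x$, and recombine, and your bound $C_w(\mathcal{N}_x)\le 1-(1-s^{\ast})q_x/p_x$ is exactly the paper's $s_x$. The step you flag as the main obstacle --- that each $\Theta_x$ individually maps EB channels to (unnormalized) EB channels --- is likewise left implicit in the paper, which simply declares $\mathcal{J}_{\mathcal{E}}^{x}$ to be a separable Choi state by appeal to Ref.~\cite{luo2022coherence}.
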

\begin{proof}
By definition of the coherence weight, there exists an optimal decomposition of the Choi matrix of $\mathcal{N}$
\begin{equation}
J_{\mathcal{N}}=(1-\tilde{s}) J_{\mathcal{E}}+\tilde{s} J_{\mathcal{M}},
\end{equation}
where $\mathcal{E}\in \operatorname{EB}$ and $\mathcal{M},\mathcal{N} \in \operatorname{CPTP}$.
For each sub-superchannel $\Theta_x$, its action on the Choi matrix can be viewed as a linear, positive map, which we denote by $\Pi_{\Theta_{x}}$. Hence, 
\begin{equation}
\begin{aligned}
\mathcal{J}_{\Theta[N]} & =\Pi_{\Theta}\left(\mathcal{J}_{N}\right), \\
& =\Pi_{\Theta}\left((1-\tilde{s}) \mathcal{J}_{\mathcal{E}}+\tilde{s} \mathcal{J}_{M}\right), \\
& =(1-\tilde{s}) \Pi_{\Theta}\left(\mathcal{J}_{\mathcal{E}}\right)+\tilde{s} \Pi_{\Theta}\left(\mathcal{J}_{M}\right), \\
& =(1-\tilde{s}) \mathcal{J}_{\Theta[\mathcal{E}]}+\tilde{s} \mathcal{J}_{\Theta[M]},\\
& = \sum_x((1-\tilde{s}) \mathcal{J}_{\Theta_x[\mathcal{E}]}+\tilde{s} \mathcal{J}_{\Theta_x[M]}).
\end{aligned}
\end{equation}
According to the method described in Ref.~\cite{luo2022coherence}, we suppose
\begin{equation}
\mathcal{J}_{\mathcal{E}}^{x}=\frac{1}{\left(1-s_{x}\right) p_{x}}(1-\tilde{s}) \mathcal{J}_{\Theta_{x}\left[{\mathcal{E}}\right]},
\end{equation}
\begin{equation}
\mathcal{J}_{\mathcal{M}}^{x}=\frac{1}{s_{x} p_{x}} \tilde{s} \mathcal{J}_{\Theta_{x}\left[\tilde{\mathcal{M}}\right]},
\end{equation}
\begin{equation}
s_{x}=\frac{1}{p_{x}} \frac{\tilde{s}}{\left|B_{0}\right|} \operatorname{Tr}\left[\mathcal{J}_{\Theta_{x}\left[\tilde{\mathcal{M}}\right]}\right].
\end{equation}
Then, we have
\begin{equation}
\frac{\mathcal{J}_{\Theta_{x}[N]}}{p_x}
= (1 - s_x)\mathcal{J}_{\mathcal{E}}^{x}
+ s_x\mathcal{J}_{\mathcal{M}}^{x}.
\end{equation}
By definition of the weight-based measure, we can conclude $C_w(\Theta_{x}[\mathcal{N}]/p_x)\le s_x$,which implies
\begin{equation}
\sum_xp_xC_w(\Theta_{x}[\mathcal{N}]/p_x)\le\sum_xp_xs_x=C_w(\mathcal{N}).
\end{equation}
\end{proof}

\begin{proposition}\label{pro:pro1}
The weight-based measure of quantum memory is convex, i.e.,
\begin{equation}
C_w(p\mathcal{N}_1+(1-p)\mathcal{N}_2) \le pC_w(\mathcal{N}_1)+(1-p)C_w(\mathcal{N}_2),
\end{equation}
where $0\le p\le1$ and $\mathcal{N}_1,\mathcal{N}_2\in \operatorname{CPTP}$.
\end{proposition}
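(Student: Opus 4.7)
The plan is to prove convexity via a direct construction: given optimal EB decompositions of $\mathcal{N}_1$ and $\mathcal{N}_2$, I would form a weighted mixture that certifies a valid decomposition of $p\mathcal{N}_1 + (1-p)\mathcal{N}_2$.

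First, I would invoke the definition of $C_w$ in Eq.~\eqref{eq:eq1} to obtain, for $i=1,2$, EB channels $\mathcal{M}_i$ achieving the optimum so that $\mathcal{N}_i \ge (1-s_i)\mathcal{M}_i$, where $s_i := C_w(\mathcal{N}_i)$. Writing $s := ps_1 + (1-p)s_2$, one notes the key algebraic identity $p(1-s_1) + (1-p)(1-s_2) = 1 - s$.

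Next, I would define the candidate free channel
\begin{equation}
\mathcal{M} := \frac{p(1-s_1)\mathcal{M}_1 + (1-p)(1-s_2)\mathcal{M}_2}{1 - s},
\end{equation}
assuming $s < 1$ (the boundary case $s=1$ is trivial, since then the stated bound is $C_w \le 1$, which always holds by definition). Since $\mathcal{M}$ is a convex combination of EB channels with non-negative weights summing to one, it is itself an EB channel: this follows directly from Eq.~\eqref{eq:EB_channels}, as one may simply concatenate the measure-and-prepare structures with a classical randomization over which ensemble to output.

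Finally, I would verify the decomposition by adding the two original inequalities with weights $p$ and $1-p$:
\begin{equation}
p\mathcal{N}_1 + (1-p)\mathcal{N}_2 \ge p(1-s_1)\mathcal{M}_1 + (1-p)(1-s_2)\mathcal{M}_2 = (1-s)\mathcal{M}.
\end{equation}
By the definition of $C_w$ applied to the mixture $p\mathcal{N}_1 + (1-p)\mathcal{N}_2$, this witnesses $C_w(p\mathcal{N}_1 + (1-p)\mathcal{N}_2) \le s = pC_w(\mathcal{N}_1) + (1-p)C_w(\mathcal{N}_2)$, completing the argument. There is no real obstacle here; the only subtlety worth flagging is the convexity of the EB set, which is standard and follows immediately from the measure-and-prepare characterization.
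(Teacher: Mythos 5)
Your proof is correct and follows essentially the same route as the paper's: take the optimal decompositions $\mathcal{N}_i \ge (1-s_i)\mathcal{M}_i$, mix them with weights $p$ and $1-p$, and observe that the renormalized combination of EB channels is still EB. The only differences are cosmetic — you work at the channel level while the paper works with Choi matrices, and you are slightly more careful about the normalization of the mixed EB channel and the boundary case $s=1$, which the paper glosses over.
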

\begin{proof}
We need to show that the weight-based measure $C_w$ satisfies the convexity property. Consider two quantum channels $\mathcal{N}_1$ and $\mathcal{N}_2$, and define a new channel $\mathcal{N}_p=p\mathcal{N}_1+(1-p)\mathcal{N}_2$. The goal is to prove that 
\begin{equation}
C_w(\mathcal{N}_p) \le pC_w(\mathcal{N}_1)+(1-p)C_w(\mathcal{N}_2).
\end{equation}
Similarly, the Choi matrix $\mathcal{J}_{\mathcal{N}_p}=p\mathcal{J}_{\mathcal{N}_1}+(1-p)\mathcal{J}_{\mathcal{N}_2}$.

Therefore, for any $s_1$ and $s_2$ corresponding to $\mathcal{N}_1$ and $\mathcal{N}_2$, we can write
\begin{equation}
\mathcal{J}_{\mathcal{N}_{1}} \geq\left(1-s_{1}\right) \mathcal{J}_{\mathcal{M}_{1}}, \quad \mathcal{J}_{\mathcal{N}_{2}} \geq\left(1-s_{2}\right) \mathcal{J}_{\mathcal{M}_{2}},
\end{equation}
where $\mathcal{M}\in \operatorname{EB}$ is a sepatable Choi state. Then we have
\begin{equation}
\begin{aligned}
\mathcal{J}_{\mathcal{N}_{p}}&=p \mathcal{J}_{\mathcal{N}_{1}}+(1-p) \mathcal{J}_{\mathcal{N}_{2}}, \\
&\geq p\left(1-s_{1}\right) \mathcal{J}_{\mathcal{M}_{1}}+(1-p)\left(1-s_{2}\right) \mathcal{J}_{\mathcal{M}_2},\\ 
&= (1-\left(p s_{1}+(1-p) s_{2}\right))\mathcal{J_M}.
\end{aligned}
\end{equation}
Since $p \mathcal{J}_{\mathcal{M}_{1}}+(1-p) \mathcal{J}_{\mathcal{M}_2}$ is still a divisible Choi state, then it can be obtained by definition
\begin{equation}
\begin{aligned}
C_{w}\left(\mathcal{N}_{p}\right) &\leq p s_{1}+(1-p) s_{2},\\
&=p C_{w}\left(\mathcal{N}_{1}\right)+(1-p) C_{w}\left(\mathcal{N}_{2}\right).
\end{aligned}
\end{equation}
\end{proof}

\begin{proposition}\label{pro:pro3}
For any two channels $\mathcal{N}_1\in \operatorname{CPTP}(A)$ and $\mathcal{N}_2\in \operatorname{CPTP}(\Tilde{A})$, the weight-based measure of quantum memory satisfies
\begin{equation}\label{eq:eq3}
C_w(\mathcal{N}_1\otimes\mathcal{N}_2) \le C_w(\mathcal{N}_1)+C_w(\mathcal{N}_2)-C_w(\mathcal{N}_1)C_w(\mathcal{N}_2),   
\end{equation}
\begin{equation}\label{eq:otimes}
C_w(\mathcal{N}_1\circ\mathcal{N}_2) \le C_w(\mathcal{N}_1)+C_w(\mathcal{N}_2)-C_w(\mathcal{N}_1)C_w(\mathcal{N}_2).
\end{equation}
\end{proposition}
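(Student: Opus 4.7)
The plan is to lift the scalar identity $1-(1-s_1)(1-s_2) = s_1+s_2-s_1s_2$ to the channel level by exhibiting an explicit weight decomposition of the composite channel. Setting $s_i := C_w(\mathcal{N}_i)$, the definition in Eq.~\eqref{eq:eq2} supplies EB channels $\mathcal{M}_i$ with $\mathcal{J}_{\mathcal{N}_i} \ge (1-s_i)\mathcal{J}_{\mathcal{M}_i}$. Since the difference is CP and has partial trace $s_i\,\mathrm{id}/d$ over the output, each channel admits a decomposition $\mathcal{N}_i = (1-s_i)\mathcal{M}_i + s_i\mathcal{R}_i$ with $\mathcal{R}_i\in\operatorname{CPTP}$.

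For the tensor-product bound in Eq.~\eqref{eq:eq3}, I would distribute
\begin{equation*}
\mathcal{N}_1\otimes\mathcal{N}_2 = (1-s_1)(1-s_2)\,\mathcal{M}_1\otimes\mathcal{M}_2 + \Delta_\otimes,
\end{equation*}
where $\Delta_\otimes$ is a positive combination of tensor products of CP maps and hence CP. Because tensor products of EB channels remain EB, $\mathcal{M}_1\otimes\mathcal{M}_2$ is a feasible free channel, and the Choi-state comparison $\mathcal{J}_{\mathcal{N}_1\otimes\mathcal{N}_2} \ge (1-s_1)(1-s_2)\,\mathcal{J}_{\mathcal{M}_1\otimes\mathcal{M}_2}$ immediately gives $C_w(\mathcal{N}_1\otimes\mathcal{N}_2) \le 1-(1-s_1)(1-s_2)$. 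The composition bound in Eq.~\eqref{eq:otimes} follows by the identical expansion with $\circ$ in place of $\otimes$: the leading term $(1-s_1)(1-s_2)\,\mathcal{M}_1\circ\mathcal{M}_2$ is again EB, and the three remainder terms $s_1(1-s_2)\mathcal{R}_1\circ\mathcal{M}_2$, $(1-s_1)s_2\mathcal{M}_1\circ\mathcal{R}_2$, $s_1 s_2\mathcal{R}_1\circ\mathcal{R}_2$ are each CP, so the same Choi-state argument closes the case.

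The only step requiring care --- more a formal check than a real obstacle --- is verifying the two closure properties of EB: $\mathcal{M}_1\otimes\mathcal{M}_2\in\textrm{EB}$ and $\mathcal{M}_1\circ\mathcal{M}_2\in\textrm{EB}$. Both follow directly from the measure-and-prepare form in Eq.~\eqref{eq:EB_channels}: the product POVM and tensor-product preparations handle the first, while for the second the classical intermediate register forces the composition to remain measure-and-prepare regardless of the other factor (in fact EB is a two-sided ideal under composition with arbitrary channels). With these closure properties in hand, both inequalities collapse to the single observation that $(1-s_1)(1-s_2)$ is a valid witness for $C_w$ of the composite.
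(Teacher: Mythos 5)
Your proposal is correct and follows essentially the same route as the paper: take optimal EB channels $\mathcal{M}_i$ witnessing $C_w(\mathcal{N}_i)=s_i$, use closure of the EB set under $\otimes$ and $\circ$, and read off the feasible value $1-(1-s_1)(1-s_2)$. Your explicit decomposition $\mathcal{N}_i=(1-s_i)\mathcal{M}_i+s_i\mathcal{R}_i$ and term-by-term expansion merely makes rigorous the step the paper writes as a direct product of operator inequalities (and spells out the composition case the paper dismisses with ``similarly''), so the two arguments are the same in substance.
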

\begin{proof}
First, we suppose
\begin{equation}
s_i = C_w(\mathcal{N}_i),
\end{equation}
and choose entanglement-breaking channels $\mathcal{E}_i$ such that 
\begin{equation}
\mathcal{N}_i\ge(1-s_i)\mathcal{E}_i.
\end{equation}
Then we can write 
\begin{align}
\mathcal{N}_{1} \otimes \mathcal{N}_{2} &\geq\left(1-s_{1}\right) \mathcal{E}_{1} \otimes\left(1-s_{2}\right) \mathcal{E}_{2},\\
&=\left(1-s_{1}\right)\left(1-s_{2}\right)\left(\mathcal{E}_{1} \otimes \mathcal{E}_{2}\right),
\end{align}
where $\mathcal{E}_1\otimes\mathcal{E}_2$ remains entanglement-breaking. By definition of the weight-based measure,
\begin{align}
C_{w}\left(\mathcal{N}_{1} \otimes \mathcal{N}_{2}\right) &\leq 1-\left(1-s_{1}\right)\left(1-s_{2}\right),\\
&=s_{1}+s_{2}-s_{1} s_{2}.
\end{align}
Similarly, the same result holds for Eq.~\eqref{eq:otimes}.
\end{proof}

\begin{proposition}
The weight-based measure $C_w(\mathcal{N})$ can be computed with following conic program:
\begin{equation}\label{eq:3_4}
\begin{aligned}
C_w(\mathcal{N})= & \max 1-\operatorname{Tr}\left[W \mathcal{J}_{\mathcal{N}}\right] \\
\text {s.t. } & W^{\dagger}=W, \\
& \operatorname{Tr}[\mathcal{J}_{\mathcal{N}}W]\ge0,\\
& \operatorname{Tr}[\mathcal{J}_{\mathcal{M}}W]\ge1, \\
& \forall\mathcal{N}\in \operatorname{CPTP},\forall\mathcal{M}\in \operatorname{EB},
\end{aligned}
\end{equation}
where $\mathcal{J}_{\mathcal{N}}$ is bipartite Choi states and $\mathcal{J}_{\mathcal{M}}$ is separable Choi states.   
\end{proposition}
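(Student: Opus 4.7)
The plan is to derive the stated conic program as the Lagrangian dual of the primal definition in Eq.~\eqref{eq:choi_state}, then invoke strong duality to match them exactly.

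First, I would recast the primal in a clean conic form suitable for dualization. Introduce the variable $X := (1-s)\mathcal{J}_{\mathcal{M}}$, which lies in the convex cone $\mathcal{K}_{\operatorname{EB}} := \operatorname{cone}\{\mathcal{J}_{\mathcal{M}} : \mathcal{M} \in \operatorname{EB}\}$. Since every Choi state satisfies $\operatorname{Tr}[\mathcal{J}_{\mathcal{M}}] = 1$, we have $s = 1 - \operatorname{Tr}[X]$, and the primal becomes
\begin{equation*}
C_w(\mathcal{N}) = 1 - \max\bigl\{\operatorname{Tr}[X]\,:\, X \in \mathcal{K}_{\operatorname{EB}},\ \mathcal{J}_{\mathcal{N}} - X \succeq 0\bigr\}.
\end{equation*}
The constraint $s \ge 0$ in the original program is automatic, because $X \succeq 0$ and $X \le \mathcal{J}_{\mathcal{N}}$ force $\operatorname{Tr}[X] \le \operatorname{Tr}[\mathcal{J}_{\mathcal{N}}] = 1$.

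Next, I would form the Lagrangian with a Hermitian dual variable $W$ attached to the semidefinite constraint $\mathcal{J}_{\mathcal{N}} - X \succeq 0$:
\begin{equation*}
L(X,W) = \operatorname{Tr}[X] + \operatorname{Tr}\!\bigl[W(\mathcal{J}_{\mathcal{N}} - X)\bigr] = \operatorname{Tr}[W\mathcal{J}_{\mathcal{N}}] + \operatorname{Tr}\!\bigl[(I - W)X\bigr].
\end{equation*}
Taking $\sup_{X \in \mathcal{K}_{\operatorname{EB}}}$ is finite if and only if $\operatorname{Tr}[(I-W)\mathcal{J}_{\mathcal{M}}] \le 0$ for every $\mathcal{M} \in \operatorname{EB}$, which, using $\operatorname{Tr}[\mathcal{J}_{\mathcal{M}}] = 1$, is precisely $\operatorname{Tr}[W\mathcal{J}_{\mathcal{M}}] \ge 1$. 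Introducing a PSD slack $Z = \mathcal{J}_{\mathcal{N}} - X \succeq 0$ and dualizing likewise forces $W$ into the dual cone of the CPTP Choi cone, which gives the condition $\operatorname{Tr}[W\mathcal{J}_{\mathcal{N}}] \ge 0$ for all $\mathcal{N} \in \operatorname{CPTP}$. Hermiticity of $W$ follows from dualizing a symmetric constraint. The dual minimizes $\operatorname{Tr}[W\mathcal{J}_{\mathcal{N}}]$, and substituting back as $C_w(\mathcal{N}) = 1 - \min \operatorname{Tr}[W\mathcal{J}_{\mathcal{N}}] = \max (1 - \operatorname{Tr}[W\mathcal{J}_{\mathcal{N}}])$ reproduces Eq.~\eqref{eq:3_4} exactly.

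Finally, I would verify Slater's condition to close the weak-to-strong duality gap. The fully depolarizing channel $\mathcal{M}_*(\rho) = \operatorname{Tr}[\rho]\,I/d$ is entanglement-breaking, and its Choi state $I/d^2$ is strictly positive. Hence for any $s$ sufficiently close to $1$ the operator $\mathcal{J}_{\mathcal{N}} - (1-s)I/d^2$ is strictly positive definite, giving a strictly feasible point for the primal. Since the primal is convex with only semidefinite and conic constraints, Slater's condition guarantees strong duality and the equality in the proposition. The main obstacle is the careful bookkeeping of two distinct cone memberships simultaneously: $X \in \mathcal{K}_{\operatorname{EB}}$ (which produces the $\operatorname{Tr}[W\mathcal{J}_{\mathcal{M}}] \ge 1$ constraint) and the implicit PSD/CPTP-cone structure (which produces the $\operatorname{Tr}[W\mathcal{J}_{\mathcal{N}}] \ge 0$ constraint), together with justifying that $\mathcal{K}_{\operatorname{EB}}$ is closed so that its bidual recovers the original set and strong duality can be applied without pathology.
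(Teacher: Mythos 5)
Your overall route—recasting the primal as a conic program over the cone generated by EB Choi states and passing to the Lagrangian dual—is the same as the paper's (Appendix~\ref{sec:appdix_fa}), and the dualization itself is sound: the finiteness condition over $\mathcal{K}_{\operatorname{EB}}$ correctly yields $\operatorname{Tr}[W\mathcal{J}_{\mathcal{M}}]\ge 1$, and the value $1-\min\operatorname{Tr}[W\mathcal{J}_{\mathcal{N}}]$ matches Eq.~\eqref{eq:3_4}. One small imprecision: a slack constrained to the full PSD cone dualizes to $W\succeq 0$, not to the weaker condition $\operatorname{Tr}[W\mathcal{J}_{\mathcal{N}}]\ge 0$ for all CPTP $\mathcal{N}$ that appears in the proposition; to recover the stated program verbatim you should constrain the slack to $\operatorname{cone}(\mathcal{V})$ (legitimate, since $\mathcal{J}_{\mathcal{N}}-(1-s)\mathcal{J}_{\mathcal{M}}$ automatically has marginal proportional to the identity). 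This does not change the optimal value, since the two dual feasible sets are nested and both are sandwiched by weak duality.

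The genuine gap is your Slater certificate. You claim $\mathcal{J}_{\mathcal{N}}-(1-s)I/d^2\succ 0$ for $s$ close to $1$, but this fails whenever $\mathcal{J}_{\mathcal{N}}$ is rank-deficient: subtracting any positive multiple of the identity from a singular PSD operator produces a negative eigenvalue. This is not a corner case—the unitary channels treated in Section~\ref{sec:examples} have rank-one Choi states, so your primal has \emph{no} strictly feasible point there and the argument as written does not close the duality gap for precisely the most resourceful memories. The fix is to certify strong duality from the dual side instead: the dual program is strictly feasible (take $W=cI$ with $c>1$, which satisfies $W\succ 0$ and $\operatorname{Tr}[W\mathcal{J}_{\mathcal{M}}]=c>1$ for every EB channel $\mathcal{M}$) and bounded, which gives zero gap and attainment of the primal. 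With that substitution your proof goes through; note that the paper's own appendix writes down the Lagrangian and asserts the dual without verifying any constraint qualification at all, so your instinct to check one is correct even though the particular certificate chosen does not work.
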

The proof can be found in Appendix~\ref{sec:appdix_fa}. 

Next, we establish a general lower bound for the weight-based measure of quantum memory. For the sake of clarity in the subsequent proof, we first introduce the definition of the robustness of quantum memory.

For any quantum channel $\mathcal{N}\in \operatorname{CPTP}(A_0\to A_1)$, the robustness of quantum memory is defined as~\cite{yuan2021universal}
\begin{align}\label{eq:ROM} 
C_r(\mathcal{N})=\min\{s\geq0: \mathcal{M}\in \operatorname{CPTP}(A_0\to A_1),\\\nonumber \frac{1}{1+s}(\mathcal{N}+s\mathcal{M})\in\operatorname{EB} \}. 
\end{align}

Equivalently, the robustness of quantum memory can be expressed in terms of the Choi states of these channels as~\cite{yuan2021universal}
\begin{align}\label{eq:ROM-state}
C_r(\mathcal{N})
=
\min\{s\geq0: \mathcal{J_M}\in \textrm{D}(A_0A_1),\\\nonumber \frac{1}{1+s}(\mathcal{J_N}+s\mathcal{J_M})\in\operatorname{SEP} \}, \end{align}
where $\operatorname{SEP}$ denotes the set of separable states.
It is worth noting that this quantity coincides with the robustness of entanglement for quantum states~\cite{vidal1999robustness}.

\begin{proposition}\label{prop:weight-robustness}
Given a channel $\mathcal{N}\in\operatorname{CPTP}(A_0\to A_1)$, its weight-based measure $C_w$ has the following lower bound:
\begin{equation}\label{eq:xiajie_1}
    C_w(\mathcal{N})\ge \frac{d_{A_0} \max \text{eig}(\mathcal{J_N}) - 1}{d_{A_0}^2-1},
\end{equation}
where $\text{eig}$ represents the eigenvalue of the Choi state $\mathcal{J_N}$. 
Furthermore, for a single-qubit channel $\mathcal{N}$, we have
\begin{equation}\label{eq:xiajie_2}
    C_w(\mathcal{N})\ge \frac{d_{A_0} \max \text{eig}(\mathcal{J_N}) - 1}{|\lambda_1\lambda_2\lambda_3|^{\frac{1}{8}}},
\end{equation}
where $\lambda_i(i=1,2,3)$ represents the eigenvalues of the matrix $T=S^T S$, and $S=(s_{ij})_{3\times3}$ is determined by the Choi state:
\begin{equation}
    \mathcal{J_N} = \frac{I}{4} + \sum_{i=1}^3 a_i \sigma_i \otimes I + \sum_{j=1}^3 b_j I \otimes \sigma_j \\
+ \frac{1}{4} \sum_{i,j=1}^3 s_{ij} \sigma_i \otimes \sigma_j.
\end{equation}
\end{proposition}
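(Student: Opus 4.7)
The plan is to convert the weight-based measure into a spectral statement using the decomposition of $\mathcal{J}_{\mathcal{N}}$ implicit in the definition of $C_w$. I set $s = C_w(\mathcal{N})$; by the Choi-state formulation~\eqref{eq:eq2} there is an EB channel $\mathcal{M}$ with $\mathcal{J}_{\mathcal{N}} \ge (1-s)\mathcal{J}_{\mathcal{M}}$. Defining $\mathcal{J}_{\mathcal{R}} := (\mathcal{J}_{\mathcal{N}} - (1-s)\mathcal{J}_{\mathcal{M}})/s$, the residual $\mathcal{J}_{\mathcal{R}}$ is positive by construction and inherits the marginal $\operatorname{Tr}_{A_1}\mathcal{J}_{\mathcal{R}} = I_{A_0}/d_{A_0}$ from the trace-preserving property of $\mathcal{J}_{\mathcal{N}}$, so it is itself a valid Choi state. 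Hence
\begin{equation*}
\mathcal{J}_{\mathcal{N}} = (1-s)\mathcal{J}_{\mathcal{M}} + s\mathcal{J}_{\mathcal{R}},
\end{equation*}
with $\mathcal{J}_{\mathcal{M}}$ separable, and the task reduces to controlling the operator norms of the two summands.

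The central lemma I would prove is the uniform spectral bound $\|\mathcal{J}_{\mathcal{M}}\|_{\infty} \le 1/d_{A_0}$ for any separable Choi state. Given a unit vector $|v\rangle$, identify it with its matrix form $V$ via $|v\rangle = \sum_{ij} V_{ij}|ij\rangle$ so that $\operatorname{Tr}(VV^{\dagger})=1$; combining a separable decomposition $\mathcal{J}_{\mathcal{M}} = \sum_{k} p_{k}\,\rho_{k}\otimes\sigma_{k}$ with the trace-preservation constraint $\sum_{k} p_{k}\rho_{k} = I_{A_0}/d_{A_0}$, a short calculation gives
\begin{equation*}
\langle v|\mathcal{J}_{\mathcal{M}}|v\rangle = \sum_{k} p_{k}\operatorname{Tr}(\rho_{k}V\sigma_{k}^{T}V^{\dagger}) \le \sum_{k} p_{k}\operatorname{Tr}(\rho_{k}VV^{\dagger}) = \tfrac{1}{d_{A_0}},
\end{equation*}
where the first inequality uses $\|\sigma_{k}\|_{\infty}\le 1$ and the final equality uses $\operatorname{Tr}(VV^{\dagger})=1$. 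Combined with the matching dimension-dependent bound on $\|\mathcal{J}_{\mathcal{R}}\|_{\infty}$ (obtained from the normalization and marginal constraints of a Choi state) and with the triangle inequality for the operator norm, this yields a linear inequality of the form $\lambda_{\max}(\mathcal{J}_{\mathcal{N}}) \le (1-s)/d_{A_0} + s\,c_{d_{A_0}}$; rearranging for $s$ produces the general bound~\eqref{eq:xiajie_1}.

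For the single-qubit refinement~\eqref{eq:xiajie_2} I would specialize to $d_{A_0}=2$ and work in the Bloch/Fano parametrization supplied in the statement. In two qubits separability coincides with PPT (Peres--Horodecki), a condition that translates into spectral constraints on the partial transpose of $\mathcal{J}_{\mathcal{M}}$ naturally expressed through the singular values of the correlation matrix $S$---equivalently, through the eigenvalues $\lambda_{i}$ of $T = S^{T}S$. Substituting the resulting geometry-aware estimate of $\|\mathcal{J}_{\mathcal{M}}\|_{\infty}$ into the combination from the previous paragraph and tracking its dependence on $S$ should cause $|\lambda_{1}\lambda_{2}\lambda_{3}|^{1/8} = |\det S|^{1/4}$ to emerge as the effective denominator in place of $d_{A_0}^{2}-1$; I would verify consistency in the edge cases (e.g.\ the identity channel, where $S = \operatorname{diag}(1,-1,1)$ gives $|\det S|^{1/4}=1$ and the bound tightens to $C_{w}(\mathrm{id})\ge 1$).

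The main obstacle is this last step: producing precisely the factor $|\lambda_{1}\lambda_{2}\lambda_{3}|^{1/8}$ rather than a looser dimension-only constant demands careful spectral algebra on the $4\times 4$ Choi matrix, almost certainly invoking the full singular-value decomposition of $S$ together with the PSD constraint on the partial transpose of the separable component, and handling the degenerate case $\det S = 0$ as a trivial limit. By contrast, once the separable-Choi-state bound $1/d_{A_0}$ is established, the general-$d$ part is essentially a one-line rearrangement.
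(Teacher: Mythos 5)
Your argument for the general bound \eqref{eq:xiajie_1} is correct but follows a genuinely different route from the paper's. The paper works through the robustness of quantum memory: it takes the optimal decomposition $\mathcal{J}_{\mathcal{N}}=(1-C_w)\sigma^*+C_w\tau^*$, uses convexity of $C_r$ and $C_r(\sigma^*)=0$ to get $C_r(\mathcal{N})\le C_w(\mathcal{N})\,C_r(\tau^*)$, and then inserts two imported facts, $C_r(\tau^*)\le d_{A_0}^2-1$ and $C_r(\mathcal{N})\ge d_{A_0}\max\operatorname{eig}(\mathcal{J}_{\mathcal{N}})-1$, both cited from the robustness literature. You instead prove the spectral lemma $\lVert\mathcal{J}_{\mathcal{M}}\rVert_\infty\le 1/d_{A_0}$ for separable Choi states directly (your computation using the marginal constraint $\sum_k p_k\rho_k=I_{A_0}/d_{A_0}$ and $\lVert\sigma_k\rVert_\infty\le1$ is sound) and combine it with $\lVert\mathcal{J}_{\mathcal{R}}\rVert_\infty\le1$ via the triangle inequality. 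This is self-contained and in fact yields the \emph{stronger} inequality $C_w(\mathcal{N})\ge\big(d_{A_0}\max\operatorname{eig}(\mathcal{J}_{\mathcal{N}})-1\big)/(d_{A_0}-1)$, which implies \eqref{eq:xiajie_1}; what you do not recover is the intermediate relation $C_w(\mathcal{N})\ge C_r(\mathcal{N})/(d_{A_0}^2-1)$ of Eq.~\eqref{eq:weight-robustness}, which the paper extracts as a byproduct of its proof.

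For the single-qubit bound \eqref{eq:xiajie_2}, however, your proposal has a genuine gap: the decisive step --- showing that the PPT/Bloch-sphere analysis of the separable component produces exactly the factor $|\lambda_1\lambda_2\lambda_3|^{1/8}$ --- is left entirely unexecuted, and nothing in your sketch indicates why that particular exponent or that particular product of eigenvalues of $T=S^TS$ would emerge from a bound on $\lVert\mathcal{J}_{\mathcal{M}}\rVert_\infty$. The paper does not obtain this factor from any spectral analysis of the separable part; it stays inside the robustness argument and simply replaces the generic estimate $C_r(\tau^*)\le d_{A_0}^2-1$ with the two-qubit bound $C_r(\tau^*)\le|\lambda_1\lambda_2\lambda_3|^{1/8}$ quoted from Ref.~\cite{chang2024visually}, then divides as before. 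In other words, the refinement lives entirely in the denominator $C_r(\tau^*)$ of the robustness-based chain of inequalities, which your route discards at the outset. (A further point your consistency check on the identity channel does not probe: in the paper's proof the eigenvalues $\lambda_i$ arise from the correlation matrix of the residual state $\tau^*$, while the statement parametrizes them by $\mathcal{J}_{\mathcal{N}}$, so any independent derivation would have to confront which correlation matrix actually enters.) As written, your argument establishes only the first half of the proposition.
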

\begin{proof}
Let 
\[
\mathcal{J_N}=[1 - C_w(\mathcal{J_N})]\sigma^* + C_w(\mathcal{J_N})\tau^*
\]
be the optimal decomposition given in Eq.~(\ref{eq:eq2}), where $\sigma^*\in\textrm{D}(A_0A_1)$ is a separable state and $\tau^*\in\textrm{D}(A_0A_1)$ is a quantum state. 
Since the robustness of quantum memory $C_r(\mathcal{N})$ is a convex function~\cite{yuan2021universal}, we obtain
\begin{align}
C_r(\mathcal{N})
&\nonumber=
C_r([1 - C_w(\mathcal{J_N})]\sigma^* + C_w(\mathcal{J_N})\tau^*)
\\&\nonumber\le 
[1 - C_w(\mathcal{J_N})]C_r(\sigma^*) + C_w(\mathcal{J_N})C_r(\tau^*) 
\\&= 
C_w(\mathcal{J_N})C_r(\tau^*),
\end{align}
where the first equality follows from the fact that the robustness of quantum memory can equivalently be expressed in terms of Choi states, in direct analogy with the robustness of entanglement~\cite{yuan2021universal,vidal1999robustness}.
The inequality follows from the fact that $\sigma^*$ is a separable state and $C_r(\sigma^*)=0$. Additionally, $C_r(\tau^*)\le d_{A_0}^2-1$ implies that
\begin{align}\label{eq:35}
C_w(\mathcal{N})
&\nonumber\ge
\frac{C_r(\mathcal{J_N})}{C_r(\tau^*)} 
\\&\nonumber\ge
\frac{1}{d_{A_0}^2-1}C_r(\mathcal{J_N})
\\&\ge 
\frac{d_{A_0} \max \text{eig}(\mathcal{J_N}) - 1}{d_{A_0}^2-1}.
\end{align}
The third inequality holds is due to $C_r(\mathcal N) \ge d_{A_0} \max \text{eig}(\mathcal{J}_{\mathcal N}) - 1$ as described in Ref.~\cite{yuan2021universal}, which proves the Eq.~(\ref{eq:xiajie_1}).
For a single-qubit channel, according to a key result from Ref.~\cite{chang2024visually}, we have a compact upper bound
\begin{equation}\label{eq:36}
    C_r(\tau^*)\le |\lambda_1\lambda_2\lambda_3|^{\frac{1}{8}}.
\end{equation}
Thus, we can prove Eq.~(\ref{eq:xiajie_2}) by substituting Eq.~(\ref{eq:36}) into Eq.~(\ref{eq:35}).
\end{proof}

From Proposition~\ref{prop:weight-robustness}, we further obtain the following relation between the weight-based measure of quantum memory and the robustness of quantum memory:
\begin{equation}\label{eq:weight-robustness}
   C_w(\mathcal{N}) 
   \ge
   \frac{1}{d_{A_0}^2-1}C_r(\mathcal{N}).
\end{equation}
\subsection{The
performance advantage of quantum memory in nonlocal exclusion task}\label{sec:application}
In this section, we will discuss the advantage of quantum memory in exclusion tasks. 
We consider quantum games similar to Refs.~\cite{yuan2021universal,rosset2018resource,uola2020all}. Suppose there are two people Alice and Bob. Alice promises to select a quantum state $\sigma_i$ from a set of states $\{\sigma_i\}$ according to some probability distribution and encodes it in a memory $\mathcal{N}$, and sends it to Bob. Bob receives the state through the same quantum memory channel $\mathcal{N}$, and then performs a measurement using a set of observables $\{O_j\}$. When the input state is $\sigma_i$, the probability that Bob guesses $\sigma_j$ is $\operatorname{Tr}[\mathcal{N}(\sigma_i)O_j]$. For each pair of input label $i$ and measurement outcome $j$, Bob earns a payoff determined by a real-valued coefficient $\alpha_{i,j}$. The overall performance of the memory channel $\mathcal{N}$ in the game is quantified by the average payoff:
\begin{equation}
    P(\mathcal{N}, G)=\sum_{i, j} \alpha_{i, j} \operatorname{Tr}\left[\mathcal{N}\left(\sigma_{i}\right) O_{j}\right],
\end{equation}
where $\mathcal{G}=(\{\sigma_i\},\{O_j\},\{\alpha_{i,j}\})$ defines the game. This general framework captures a wide class of tasks, including quantum state discrimination and other decision-making scenarios, and is used to assess how well the memory channel preserves quantum information relative to classical (entanglement-breaking) channels.

By using Choi–Jamiołkowski isomorphism, we can get this result
\begin{equation}
\begin{aligned}
\mathcal{N}(\rho) &= \sum_{i,j=1}^d \rho_{ij} \, N\big( |i\rangle \langle j| \big),\\
&= d\operatorname{Tr}_A \left[ (\rho^{T} \otimes I_B) \mathcal{J}_\mathcal{N} \right],
\end{aligned}
\end{equation}
where $\mathcal{J}_\mathcal{N}$ is the Choi state of $\mathcal{N}$, $d$ is the dimension of the input system.

Then we can write
\begin{equation}
\mathcal{P}(\mathcal{N}, \mathcal{G})=d \sum_{i, j} \alpha_{i, j} \operatorname{Tr}\left[\mathcal{J}_\mathcal{N}\left(\sigma_{i}^{T} \otimes O_{j}\right)\right]=\operatorname{Tr}\left[\mathcal{J}_\mathcal{N} W\right],
\end{equation}
where $\sigma^{T}_{i}$ is the transpose of $\sigma_i$ and $W$ is a Hermitian operator
\begin{equation}
W=d \sum_{i, j} \alpha_{i, j} \sigma_{i}^{T} \otimes O_{j},
\end{equation}
with real coefficients $\alpha_{i,j}$.

Now, we connect the weight-based measure of quantum memory with its actual performance in a game task. The following proposition shows that the maximum advantage of quantum memory can be precisely quantified in this manner:
\begin{proposition}
Let $\mathcal{G}=(\{\sigma_i\},\{O_j\},\{\alpha_{i,j}\})$ denotes a particular game. Then the maximal pay-off of the game is 
\begin{equation}\label{eq:eq21}
\max _{\mathcal{G} \in \mathcal{S}_{G}} \mathcal{P}(\mathcal{N}, \mathcal{G})=1-C_w(\mathcal{N}),
\end{equation}
where the maximisation is over all games $\mathcal{G}\in \mathcal{S}_G$ with $\mathcal{S}_{G}=\{\mathcal{G}: \mathcal{P}(\mathcal{N}, \mathcal{G}) \geq 0, \mathcal{P}(\mathcal{M}, \mathcal{G}) \ge 1, \forall \mathcal{N} \in \operatorname{CPTP}, \mathcal{M} \in \textrm{EB}\}$.
\end{proposition}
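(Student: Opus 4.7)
The plan is to translate the game quantity $\mathcal{P}(\mathcal{N},\mathcal{G})$ into a witness inner product using the Choi--Jamio{\l}kowski identity derived just before the proposition, and then to identify the resulting optimization with the conic program of Eq.~(\ref{eq:3_4}). Writing $\mathcal{P}(\mathcal{N},\mathcal{G}) = \mathrm{Tr}[W\mathcal{J}_\mathcal{N}]$ with $W = d\sum_{i,j}\alpha_{ij}\,\sigma_i^{T}\otimes O_j$, the operator $W$ is automatically Hermitian since the $\sigma_i^T$ and $O_j$ are Hermitian and the $\alpha_{ij}$ are real. So every admissible game produces an admissible witness, and the whole proof reduces to showing that this correspondence is a bijection between $\mathcal{S}_G$ and the feasible set of Eq.~(\ref{eq:3_4}).

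First I would translate the defining inequalities of $\mathcal{S}_G$ into witness-level constraints. The condition $\mathcal{P}(\mathcal{N}',\mathcal{G})\ge 0$ for every $\mathcal{N}'\in\mathrm{CPTP}$ becomes $\mathrm{Tr}[W\mathcal{J}_{\mathcal{N}'}]\ge 0$, and $\mathcal{P}(\mathcal{M},\mathcal{G})\ge 1$ for every $\mathcal{M}\in\mathrm{EB}$ becomes $\mathrm{Tr}[W\mathcal{J}_\mathcal{M}]\ge 1$. These are exactly the feasibility conditions in Eq.~(\ref{eq:3_4}) once Hermiticity is included. Conversely, I need to argue that every Hermitian $W$ feasible for Eq.~(\ref{eq:3_4}) arises from a game in $\mathcal{S}_G$: one expands $W$ in a product basis on $A_0\otimes A_1$, writes each single-system factor as a linear combination of rank-one projectors (which are simultaneously density operators $\sigma_i^T$ and, up to rescaling, POVM effects $O_j$), and absorbs the remaining real scaling into the coefficients $\alpha_{ij}$, completing the $\{O_j\}$ into a POVM by adding a complementary element with zero coefficient.

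With this correspondence in hand, the supremum of $\mathcal{P}(\mathcal{N},\mathcal{G})$ over $\mathcal{G}\in\mathcal{S}_G$ coincides with the extremal value of the conic program applied to $\mathcal{J}_\mathcal{N}$. By Eq.~(\ref{eq:3_4}) the optimal witness $W^\star$ satisfies $1-\mathrm{Tr}[W^\star \mathcal{J}_\mathcal{N}]=C_w(\mathcal{N})$, and therefore the extremal pay-off equals $\mathrm{Tr}[W^\star\mathcal{J}_\mathcal{N}]=1-C_w(\mathcal{N})$, which is the claimed identity. The reverse inequality already comes for free from feasibility: any $\mathcal{G}\in\mathcal{S}_G$ yields a feasible $W$, so $\mathcal{P}(\mathcal{N},\mathcal{G})$ cannot exceed the extremal value dictated by the conic program.

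The main obstacle is the surjectivity step: the factors $\sigma_i$ must be genuine density operators and the $O_j$ must form a POVM, so realising a prescribed Hermitian $W$ within these constraints is not automatic. The cleanest route is to use a tomographically complete set of pure states $\{\sigma_i\}$ together with a tomographically complete POVM $\{O_j\}$ (for example, a SIC ensemble on each subsystem), since then every Hermitian $W$ on $A_0\otimes A_1$ is uniquely recovered by an appropriate real choice of $\alpha_{ij}$; checking that the two admissibility inequalities of $\mathcal{S}_G$ are preserved under this construction then makes the identification with Eq.~(\ref{eq:3_4}) rigorous.
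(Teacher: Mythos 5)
Your overall route is the same as the paper's: both reduce the claim to the dual conic program of Eq.~(\ref{eq:3_4}) by writing $\mathcal{P}(\mathcal{N},\mathcal{G})=\operatorname{Tr}[W\mathcal{J}_{\mathcal{N}}]$ with $W=d\sum_{i,j}\alpha_{i,j}\sigma_i^{T}\otimes O_j$ and identifying the defining constraints of $\mathcal{S}_G$ with the feasibility conditions on the witness $W$. Your insistence on the surjectivity of the game-to-witness map (every feasible Hermitian $W$ must actually be realizable with genuine density operators $\sigma_i$, a genuine POVM $\{O_j\}$, and real $\alpha_{i,j}$) is a point the paper passes over entirely --- its proof simply re-derives the dual program and asserts that it ``is exactly the maximal pay-off function'' --- and your suggestion of tomographically complete ensembles is the right way to close that gap.

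However, the final step of your argument has the inequality running the wrong way. The program in Eq.~(\ref{eq:3_4}) \emph{maximizes} $1-\operatorname{Tr}[W\mathcal{J}_{\mathcal{N}}]$, i.e.\ it \emph{minimizes} $\operatorname{Tr}[W\mathcal{J}_{\mathcal{N}}]=\mathcal{P}(\mathcal{N},\mathcal{G})$ over the feasible set. Consequently every feasible game obeys $\mathcal{P}(\mathcal{N},\mathcal{G})\ge 1-C_w(\mathcal{N})$; one sees this directly from the optimal decomposition $\mathcal{J}_{\mathcal{N}}=(1-s)\mathcal{J}_{\mathcal{M}}+s\,\mathcal{J}_{\mathcal{E}}$ with $\mathcal{M}\in\operatorname{EB}$ and $\mathcal{E}\in\operatorname{CPTP}$, which gives $\operatorname{Tr}[W\mathcal{J}_{\mathcal{N}}]\ge(1-s)\cdot 1+s\cdot 0$. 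So $1-C_w(\mathcal{N})$ is the \emph{minimum} pay-off over $\mathcal{S}_G$, not the maximum, and your sentence ``$\mathcal{P}(\mathcal{N},\mathcal{G})$ cannot exceed the extremal value'' is false as stated; indeed the supremum of $\mathcal{P}(\mathcal{N},\mathcal{G})$ over $\mathcal{S}_G$ is $+\infty$, since $W=cI$ is feasible for every $c\ge 1$. This directional confusion is already present in the proposition and in the paper's own proof (which equates a maximization of $1-\operatorname{Tr}[W\mathcal{J}_{\mathcal{N}}]$ over $W$ with a ``maximal pay-off''), and the corrected version is the one consistent with the exclusion-task interpretation: the pay-off is an error that a quantum memory lets you push \emph{below} the EB baseline of $1$, down to $1-C_w(\mathcal{N})$. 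A correct write-up must replace the max by a min and flip your final inequality accordingly.
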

\begin{proof}
We first write the weight-based measure in the standard form of the primal optimisation problem as
\begin{equation}
\begin{aligned}
C_w(\mathcal{N})= & \min 1-\operatorname{Tr}[x_1] \\
\text { s.t. } & x_1+x_2=\mathcal{J}_N , \\
& x_1\in \operatorname{cone}(\mathfrak{F}),x_2\in\operatorname{cone}(\mathcal{V}) .
\end{aligned}
\end{equation}
Now define $W = \operatorname{cone}(\mathcal{V})\oplus\operatorname{cone}(\mathcal{V})$, $\mathcal{W'}=\mathcal{V}$, $\mathcal{K}=\operatorname{cone}(\mathfrak{F})\oplus\operatorname{cone}(\mathcal{V})$, $\Lambda(x_1\oplus x_2)=x_1+x_2$, $A=I\oplus0$, $y=\mathcal{J}_{\mathcal{M}}$.
The dual form of the optimisation gives
\begin{equation}
\begin{aligned}
C_w(\mathcal{N})= & \max 1-\operatorname{Tr}[W\mathcal{J}_{\mathcal{N}}] \\
\text { s.t. } & \langle \Lambda^*(W)-I\oplus0,k\rangle\ge 0,\forall k\in \mathcal{K} ,  \\
& W\in \mathcal{W'} ,\\
\end{aligned}
\end{equation}
where 
\begin{equation}
\begin{aligned}
\left\langle \Lambda^*(W)-I\oplus0,k\right\rangle &\ge 0 ,\forall k \in \mathcal{K} 
\Longleftrightarrow\\
&\operatorname{Tr}[Wx_1]+\operatorname{Tr}[Wx_2]\ge\operatorname{Tr}[x_1],\\
&\forall x_1\in\operatorname{cone}(\mathfrak{F}),\forall x_2 \in\operatorname{cone}(\mathcal{V}).
\end{aligned} 
\end{equation}
Note that the above condition is further equivalent to 
\begin{equation}
\begin{aligned}
\left(\operatorname{Tr}\left[W x_{1}\right] \ge \operatorname{Tr}\left[x_{1}\right], \forall x_{1} \in \operatorname{cone}(\mathfrak{F})\right) \wedge \\ 
\left(\operatorname{Tr}\left[W x_{2}\right] \geq 0, \forall x_{2} \in \operatorname{cone}(\mathcal{V})\right).
\end{aligned}
\end{equation}
We can express it as 
\begin{equation}
\begin{aligned}
C_w(\mathcal{N})= & \max 1-\operatorname{Tr}\left[W \mathcal{J}_{\mathcal{N}}\right] \\
\text {s.t. } & W^{\dagger}=W, \\
& \operatorname{Tr}[\mathcal{J}_{\mathcal{M}}W]\ge1, \\
& \forall\mathcal{M}\in \textrm{EB},
\end{aligned}
\end{equation}
which is exactly the maximal pay-off function $\max _{\mathcal{G} \in \mathcal{S}} \mathcal{P}(\mathcal{N}, \mathcal{G})=1-C_w(\mathcal{N})$.
\end{proof}
Then the following proposition further explores this quantum advantage in the context of all nonlocal quantum games by considering the ratio of payoffs achievable with quantum memory versus classical memories.
\begin{proposition}
The advantage that a quantum memory $\mathcal{N}$ can provide over classical memories in all nonlocal quantum games is given by
\begin{equation}
\max _{\mathcal{G} \in \mathcal{S}^{\prime}} \frac{P(\mathcal{N}, \mathcal{G})}{\max _{\mathcal{M} \in \mathfrak{F}} \mathcal{P}(\mathcal{M}, \mathcal{G})}=1-C_w(\mathcal{N}),
\end{equation}
where $\mathcal{G} \in \mathcal{S}^{\prime} \text { with } \mathcal{S}^{\prime}=\{\mathcal{G}: \mathcal{P}(\mathcal{M}, \mathcal{G}) \geq 1, \forall \mathcal{M} \in \textrm{EB}\}$.
\end{proposition}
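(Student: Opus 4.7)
The plan is to mirror the conic/SDP-duality argument that produced the preceding proposition, modified to handle the denominator by a rescaling trick. First, the functional
\[
R(\mathcal{G}) := \frac{\mathcal{P}(\mathcal{N},\mathcal{G})}{\max_{\mathcal{M}\in\mathfrak{F}}\mathcal{P}(\mathcal{M},\mathcal{G})}
\]
is invariant under the simultaneous rescaling $\alpha_{i,j}\mapsto \lambda\alpha_{i,j}$ for any $\lambda>0$, equivalently $W\mapsto \lambda W$ with $W=d\sum_{i,j}\alpha_{i,j}\sigma_i^T\otimes O_j$. Since $\mathcal{G}\in\mathcal{S}'$ forces $\max_{\mathcal{M}\in\mathfrak{F}}\mathcal{P}(\mathcal{M},\mathcal{G})\geq 1$, I can rescale within each equivalence class so that the denominator equals $1$; the supremum of $R$ over $\mathcal{S}'$ then becomes a supremum of $\operatorname{Tr}[W\mathcal{J}_\mathcal{N}]$ subject to $\operatorname{Tr}[W\mathcal{J}_\mathcal{M}]\leq 1$ for every $\mathcal{M}\in\operatorname{EB}$ (saturated on the extremizing EB channel).

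With the problem reduced to this conic program, I would compute its Lagrangian/SDP dual along the same lines as Eq.~(\ref{eq:3_4}). The dual constraints decompose $\mathcal{J}_\mathcal{N}$ relative to separable Choi states of EB channels, and the dual objective is, up to a change of variables, exactly the primal program defining $C_w(\mathcal{N})$ through $\mathcal{J}_\mathcal{N}\geq(1-s)\mathcal{J}_\mathcal{M}$. Strong duality (which holds because a Slater-type interior point is provided by any entanglement-breaking channel) then yields $\sup_{\mathcal{G}\in\mathcal{S}'}R(\mathcal{G})=1-C_w(\mathcal{N})$.

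For saturation I would construct an explicit optimal game from the optimal weight decomposition $\mathcal{J}_\mathcal{N}=(1-C_w(\mathcal{N}))\mathcal{J}_{\mathcal{M}^*}+C_w(\mathcal{N})\mathcal{J}_{\mathcal{L}^*}$ with $\mathcal{M}^*\in\operatorname{EB}$ and $\mathcal{L}^*\in\operatorname{CPTP}$, taking $W^*$ as the dual optimizer of the $C_w$ SDP and checking by complementary slackness that $\operatorname{Tr}[W^*\mathcal{J}_{\mathcal{M}^*}]=1$ (so the denominator is exactly $1$) while $\operatorname{Tr}[W^*\mathcal{J}_\mathcal{N}]=1-C_w(\mathcal{N})$. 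The main obstacle lies in the upper-bound direction: because $\mathcal{S}'$ drops the non-negativity constraint $\mathcal{P}(\mathcal{N}',\mathcal{G})\geq 0$ present in $\mathcal{S}_G$, the preceding proposition cannot be invoked directly, and the decisive step is to show that the $\max_{\mathcal{M}\in\mathfrak{F}}$ in the denominator effectively restores that missing constraint, so that after rescaling the residual feasibility set matches exactly the admissible witnesses in the $C_w$-dual.
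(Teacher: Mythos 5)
First, note that the paper itself states this proposition without any proof, so there is nothing to compare your argument against; it has to stand on its own. Unfortunately it cannot, because the identity you are trying to prove is false as literally written, and the "main obstacle" you flag at the end of your proposal is precisely the fatal point. Consider the trivial game with a single input state $\sigma_1=I/d$, a single observable $O_1=I$ and coefficient $\alpha_{1,1}=1$, so that $W=I\otimes I$. Then $\mathcal{P}(\mathcal{M},\mathcal{G})=\operatorname{Tr}[\mathcal{J}_{\mathcal{M}}]=1$ for every channel, hence $\mathcal{G}\in\mathcal{S}'$, the numerator equals $1$, the denominator equals $1$, and the ratio is $1$. Therefore $\max_{\mathcal{G}\in\mathcal{S}'}R(\mathcal{G})\geq 1>1-C_w(\mathcal{N})$ for any channel with nonzero weight (e.g.\ the identity channel, where the right-hand side is $0$). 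The correct form of this result, in line with the exclusion-task theorems of Uola et al.\ and Ducuara--Skrzypczyk that the paper cites, replaces both maximisations by minimisations, $\min_{\mathcal{G}}\,\mathcal{P}(\mathcal{N},\mathcal{G})/\min_{\mathcal{M}\in\mathfrak{F}}\mathcal{P}(\mathcal{M},\mathcal{G})$, and additionally requires the payoff to be non-negative on all channels (equivalently $W\geq 0$, as for genuine exclusion games); only then does the optimal weight decomposition $\mathcal{J}_{\mathcal{N}}=(1-C_w)\mathcal{J}_{\mathcal{M}^*}+C_w\mathcal{J}_{\mathcal{L}^*}$ give $\mathcal{P}(\mathcal{N},\mathcal{G})\geq(1-C_w)\mathcal{P}(\mathcal{M}^*,\mathcal{G})\geq(1-C_w)\min_{\mathcal{M}}\mathcal{P}(\mathcal{M},\mathcal{G})$, since the resourceful term can be dropped without increasing the payoff.

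Concretely, your reduction breaks at the rescaling step: normalising the denominator to $1$ turns the feasible set into $\{W:\operatorname{Tr}[W\mathcal{J}_{\mathcal{M}}]\leq 1\ \forall\mathcal{M}\in\operatorname{EB}\}$ and the objective into $\max\operatorname{Tr}[W\mathcal{J}_{\mathcal{N}}]$, which is a robustness-type program whose value is at least $1$ (again witnessed by $W=I$), not the weight program whose value is $1-C_w\leq 1$. The hoped-for mechanism by which "the $\max_{\mathcal{M}\in\mathfrak{F}}$ in the denominator effectively restores the missing constraint $\mathcal{P}(\mathcal{N}',\mathcal{G})\geq 0$" does not exist: the denominator only controls $\operatorname{Tr}[W\mathcal{J}_{\mathcal{M}}]$ on the separable Choi states and gives no handle on $\operatorname{Tr}[W\mathcal{J}_{\mathcal{L}^*}]$ for the resourceful remainder $\mathcal{L}^*$, which is exactly the term that makes the ratio exceed $1-C_w$. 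If you restate the proposition with $\min$ in both places and restrict to games with $W\geq 0$ (or explicitly impose $\mathcal{P}(\mathcal{N}',\mathcal{G})\geq 0$ for all CPTP $\mathcal{N}'$ in the definition of $\mathcal{S}'$), then your strategy of combining the dual program of Eq.~\eqref{eq:3_4} for the achievability direction with the weight decomposition for the bound direction does go through, and no duality beyond what the paper already establishes is needed.
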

\section{Fundamental limitations on the performance of the identity channel distillation protocol}\label{sec:fundamental}
In this section we show another operational meaning based on a general bound on the error necessarily incurred in any transformation of a channel under free superchannels.
Suppose $\lambda=1-s$, then we can rewrite Eq.~\eqref{eq:eq2} as
\begin{equation}
C(\mathcal{E})=\min \left\{\lambda \mid \mathcal{J}_{\mathcal{E}} \geq \lambda \mathcal{J}_{\mathcal{M}}, \mathcal{M} \in \textrm{EB} \right\},
\end{equation}
which is similar to the measures of channels mentioned in Ref.~\cite{regula2021fundamental}.

Now, we define the overlap fidelity of a unitary channel $\mathcal{U}$, where the reference state $\psi$ is strictly defined on a bipartite system.
\begin{equation}
\begin{aligned}
F(\mathcal{U}): & =\max _{\mathcal{M} \in EB} F(\mathcal{U}, \mathcal{M}) ,\\
& =\max _{\mathcal{M} \in EB} \min _{\psi}\langle\mathrm{id} \otimes \mathcal{U}(\psi), \mathrm{id} \otimes \mathcal{M}(\psi)\rangle,
\end{aligned}
\end{equation}
where in the second line we used the fact that $id\otimes \mathcal{U}(\psi)$ is a pure state.

We present the following proposition, which establishes a lower bound on the error parameter $\epsilon$ based on the channel's capacity $C(\mathcal{E})$ and properties of a resourceful unitary channel $\mathcal{U}$.
\begin{proposition}
If there exists a free superchannel $\Theta$ such that $F(\Theta(\mathcal{E}),\mathcal{U})\ge1-\epsilon$ for some resourceful unitary channel $\mathcal{U}$, then $\epsilon$ is bounded as follows
\begin{align}
\varepsilon &\geq [1 - F(\mathcal{U})] C(\mathcal{E}) \label{eq:32a} ,
\end{align}
Furthermore, this implies a more general bound:
\begin{align}
\varepsilon \geq [1 - \frac{1}{d} ] C(\mathcal{E}) \label{eq:32b}.
\end{align}
\end{proposition}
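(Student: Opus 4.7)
My plan is to exploit the optimal decomposition packaged by $C(\mathcal{E})$ to split $\mathcal{E}$ into an entanglement-breaking component and a residual component, push both through the free superchannel $\Theta$, and then track how far the fidelity to $\mathcal{U}$ can survive. By definition of $C(\mathcal{E})$ there exist $\mathcal{M}_0\in\textrm{EB}$ and a CPTP map $\mathcal{N}_0$ with $\mathcal{E}=C(\mathcal{E})\mathcal{M}_0+(1-C(\mathcal{E}))\mathcal{N}_0$. Linearity of the superchannel together with the fact that $\Theta$ is resource non-generating gives
\begin{equation}
\Theta(\mathcal{E})=C(\mathcal{E})\Theta(\mathcal{M}_0)+(1-C(\mathcal{E}))\Theta(\mathcal{N}_0),\qquad \Theta(\mathcal{M}_0)\in\textrm{EB}.
\end{equation}

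Next I would upper bound $F(\Theta(\mathcal{E}),\mathcal{U})$ by plugging in a carefully chosen test state. Let $\psi^*$ denote the pure bipartite state achieving $\min_{\psi}\langle\mathrm{id}\otimes\Theta(\mathcal{M}_0)(\psi),\mathrm{id}\otimes\mathcal{U}(\psi)\rangle$. Since $F(\Theta(\mathcal{E}),\mathcal{U})$ is itself a minimum over $\psi$, it is upper bounded by its value at $\psi^*$, and by linearity of the inner product
\begin{align}
F(\Theta(\mathcal{E}),\mathcal{U})
&\le C(\mathcal{E})\langle\mathrm{id}\otimes\Theta(\mathcal{M}_0)(\psi^*),\mathrm{id}\otimes\mathcal{U}(\psi^*)\rangle \nonumber\\
&\quad + (1-C(\mathcal{E}))\langle\mathrm{id}\otimes\Theta(\mathcal{N}_0)(\psi^*),\mathrm{id}\otimes\mathcal{U}(\psi^*)\rangle.
\end{align}
The first term equals $F(\mathcal{U},\Theta(\mathcal{M}_0))$ by construction of $\psi^*$, and is bounded by $F(\mathcal{U})$ by definition; the second is at most $1$ because $\mathrm{id}\otimes\mathcal{U}(\psi^*)$ is pure while $\mathrm{id}\otimes\Theta(\mathcal{N}_0)(\psi^*)$ is a density operator. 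This delivers $F(\Theta(\mathcal{E}),\mathcal{U})\le 1-C(\mathcal{E})[1-F(\mathcal{U})]$, which combined with the hypothesis $F(\Theta(\mathcal{E}),\mathcal{U})\ge 1-\varepsilon$ immediately yields Eq.~\eqref{eq:32a}.

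To derive Eq.~\eqref{eq:32b} I would then show $F(\mathcal{U})\le 1/d$ for any resourceful unitary channel. Evaluating the min-fidelity at $\psi=\Psi^{+}$ gives $\langle\mathrm{id}\otimes\mathcal{U}(\Psi^+),\mathrm{id}\otimes\mathcal{M}(\Psi^+)\rangle=\langle\mathcal{J}_{\mathcal{U}},\mathcal{J}_{\mathcal{M}}\rangle$; since $\mathcal{J}_{\mathcal{U}}$ is a pure maximally entangled state and $\mathcal{J}_{\mathcal{M}}$ is separable for every $\mathcal{M}\in\textrm{EB}$, the overlap with a maximally entangled state is at most $1/d$ (a standard Cauchy--Schwarz argument applied to each product term in the separable decomposition). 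Hence $F(\mathcal{U},\mathcal{M})\le 1/d$ for every $\mathcal{M}\in\textrm{EB}$, so $F(\mathcal{U})\le 1/d$, and substituting into \eqref{eq:32a} yields \eqref{eq:32b}. The main subtlety I expect is the minimax step: one must upper bound the minimum over $\psi$ of a convex combination by evaluating at a single, cleverly chosen $\psi^*$ associated with $\Theta(\mathcal{M}_0)$; naively taking the convex combination of the separate minima would give an inequality in the wrong direction.
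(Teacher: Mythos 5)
Your proposal is correct and follows essentially the same route as the paper: the paper defers Eq.~\eqref{eq:32a} to Theorem~1 of Regula--Takagi, and your decomposition $\mathcal{E}=C(\mathcal{E})\mathcal{M}_0+(1-C(\mathcal{E}))\mathcal{N}_0$ pushed through the resource-non-generating superchannel, with the test state $\psi^*$ chosen as the minimizer for $\Theta(\mathcal{M}_0)$, is exactly that argument spelled out (and you correctly flag the minimax direction issue). Your derivation of $F(\mathcal{U})\le 1/d$ via the overlap of the maximally entangled Choi state of $\mathcal{U}$ with the separable Choi state of an EB channel is precisely the paper's Appendix~\ref{sec:appdix_fc} Cauchy--Schwarz computation, so Eq.~\eqref{eq:32b} follows as in the paper.
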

The derivation of Eq.~(\ref{eq:32a}) is analogous to the Theorem 1 in Ref.~\cite{regula2021fundamental}.
The significance of the term $1-\frac{1}{d}$ can be understood by its connection to entanglement. 
Specifically, this term represents the geometric measure of entanglement for a maximally entangled state $E_{G}(\Psi^+)$. This value arises because the maximum possible fidelity between a maximally entangled state $\Psi^+$ and any separable state is exactly $\frac{1}{d}$~\cite{hedges2010efficient}:
\begin{equation}\label{eq:geometric-entanglement}
E_{G}(\Psi^+) 
= 1 - \max_{\sigma\in \operatorname{SEP}}F(\Psi^+,\sigma) 
= 1-\max_{\sigma \in \operatorname{SEP}}\langle\Psi^+|\sigma|\Psi^+ \rangle 
= 1 - \frac{1}{d},
\end{equation}
where $\Psi^+$ is maximally entangled state and $\sigma$ is separable state.
The proof can be found in Appendix~\ref{sec:appdix_fc}.

Thus, the bound in Eq.~\eqref{eq:32b} incorporates a quantity representing the maximum possible geometric entanglement in a $d$-dimensional bipartite system. This formulation suggests that the weight-based measure under investigation is intimately linked to geometric entanglement.

\section{Examples of the weighted-based measure of quantum memory}\label{sec:examples}
In this section, we will give some exact analytical expressions of the weighted-based measure of quantum memory.
\begin{figure}[htbp]
    \centering
    \includegraphics[width=0.5\textwidth]{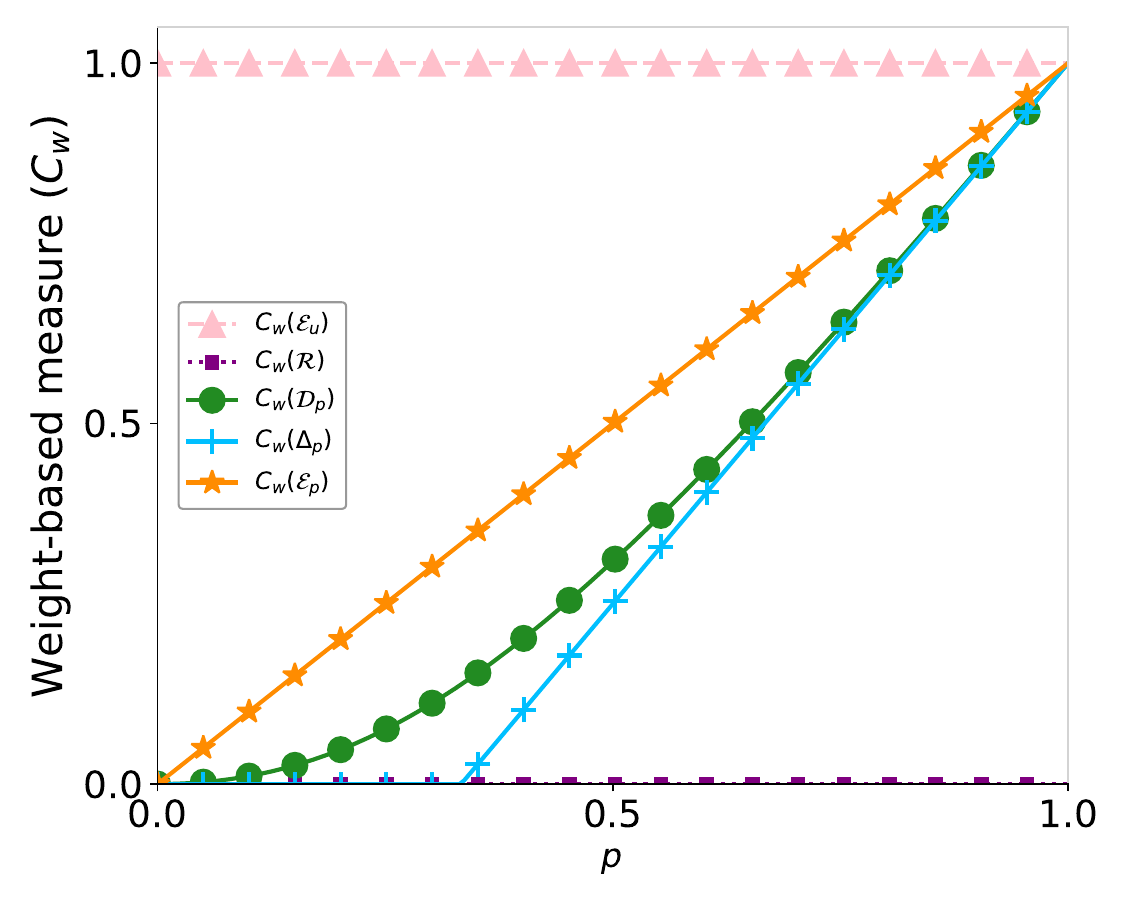} 
    \caption{To display and compare the variation of the weight-based measure of different quantum channels with respect to the parameter $p$.}
    \label{fig:chart}
\end{figure}
\subsection{Unitary channels}
A unitary channel describes a completely noiseless, ideal quantum evolution. This channel is reversible, loses no quantum information, and perfectly preserves quantum information~\cite{nielsen2010quantum}.

For the unitary channels $\mathcal{E}_u(\rho)=U\rho U^\dagger$, the corresponding Choi state is equivalent to Ref.~\cite{choi2023unital} that
\begin{equation}
\mathcal{J}_{\mathcal{E}_u} = (I\otimes U)\Psi^+(I\otimes U^\dagger)=\ket{\Psi_{\mathcal{U}}}\bra{\Psi_{\mathcal{U}}},
\end{equation}
where $\Psi^+$ is maximally entangled state and $\ket{\Psi_{\mathcal{E}_u}} = \frac{1}{\sqrt{d}}\sum_{i=0}^{d-1}\ket{i}\otimes(U\ket{i})$.

Let the Hermitian operator $W^*=\frac{d}{d-1}(I\otimes I-\ket{\Psi_{\mathcal{E}_u}}\bra{\Psi_{\mathcal{E}_u}})$. 
For any channel $\mathcal{N}$, we can check that
\begin{equation}
    \operatorname{Tr}[\mathcal{J}_{\mathcal{N}}W^*]\ge 0, \forall\mathcal{N}\in \operatorname{CPTP}.
\end{equation}
The inequality holds is due to
\begin{equation}
\begin{aligned}
\operatorname{Tr}[\mathcal{J}_{\mathcal{N}}W^*] &= \frac{d}{d-1}\operatorname{Tr}[\mathcal{J}_{\mathcal{N}}(I\otimes I-\ket{\Psi_{\mathcal{E}_u}}\bra{\Psi_{\mathcal{E}_u}})],\\
&=\frac{d}{d-1}(1-\bra{\Psi_{\mathcal{E}_u}}\mathcal{J}_{\mathcal{N}}\ket{\Psi_{\mathcal{E}_u}}),\\
&\ge  0.
\end{aligned}
\end{equation}

Similarly, for any $\operatorname{EB}$ channel $\mathcal{M}$, one can verify that 
\begin{equation}
    \operatorname{Tr}[\mathcal{J}_{\mathcal{M}}W^*]\ge 1, \forall\mathcal{N}\in \operatorname{EB},
\end{equation}
since $\bra{\Psi_{\mathcal{E}_u}}\mathcal{J}_{\mathcal{M}}\ket{\Psi_{\mathcal{E}_u}}\le\frac{1}{d}$ as established in Eq.~(\ref{eq:geometric-entanglement}).

Then $W^*$ satisfies the constraints in Eq.~\eqref{eq:3_4}. 
We have 
\begin{equation}
\begin{aligned}
C_w(\mathcal{E}_u) &= 1 - \min\operatorname{Tr}[\mathcal{J}_{\mathcal{E}_u}W^*], \\
&= 1 - \min \frac{d}{d-1}\operatorname{Tr}[(\ket{\Psi_{\mathcal{E}_u}}\bra{\Psi_{\mathcal{E}_u}})(I\otimes I-\ket{\Psi_{\mathcal{E}_u}}\bra{\Psi_{\mathcal{E}_u}})],\\
&=1- \frac{d}{d-1}\operatorname{Tr}[\ket{\Psi_{\mathcal{E}_u}}\bra{\Psi_{\mathcal{E}_u}}-1],\\
&=1.
\end{aligned}
\end{equation}
\subsection{Depolarising channels}\label{subsec:depolarising}
The depolarising channel is a symmetric noise model, often likened to quantum "white noise," where a quantum state is either left unchanged with probability $p$ or replaced by a completely random mixed state with probability $1-p$~\cite{nielsen2010quantum,wilde2013quantum}. 
For the depolarising channels 
\[
\Delta_{p}(\rho)=p \rho+(1-p) \frac{I}{2},
\]
the corresponding Choi state is the Werner state like Ref.~\cite{yuan2021universal} mentioned
\begin{equation}
\mathcal{J}_{\Delta_{p}}=p \Psi^{+}+(1-p) \frac{I}{4}.
\end{equation}
Here, we let Hermitian operator $W^*=2I\otimes I-2\ket{\Psi^+}\bra{\Psi^+}=2I - 2\Psi^+$, where $\ket{\Psi^+}=\frac{1}{\sqrt{2}}(\ket{00}+\ket{11})$.
For any channel $\mathcal{N}$, we can check that
\begin{equation}
    \operatorname{Tr}[\mathcal{J}_{\mathcal{N}}W^*]\ge 0, \forall\mathcal{N}\in \operatorname{CPTP},
\end{equation}
due to $1 - \operatorname{Tr}[\Psi^+\mathcal{J}_{\mathcal{N}}]\ge 0$.

Similarly, for any $\operatorname{EB}$ channel $\mathcal{M}$, we can check that
\begin{equation}
    \operatorname{Tr}[\mathcal{J}_{\mathcal{M}}W^*]\ge 1, \forall\mathcal{M}\in \operatorname{EB},
\end{equation}
which follows from Eq.~(\ref{eq:geometric-entanglement}). Then $W^*$ satisfies the constraints in Eq.~\eqref{eq:3_4}. 

We can show that $W^*$ is the optimal witness of $C_w(\Delta_p)$ by proving 
\begin{equation}
C_w\left(\Delta_{p}\right) \ge 1-\operatorname{Tr}[W^* \mathcal{J}_{\Delta_{p}}]=\frac{3 p-1}{2}.
\end{equation}
This is equivalent to finding a separable Choi state $\Phi^+_{\mathcal{M}}$ that satisfies
\begin{equation}
\mathcal{J}_{\Delta_{p}} \ge \frac{3(1-p)}{2} \Phi_{\mathcal{M}}^{+} .
\end{equation}
This is satisfied by choosing $\Phi^+_{\mathcal{M}}=\frac{1}{3}\Psi^++\frac{1}{6}I$.

Now we calculate the pay-off of this game is 
\begin{equation}
\begin{aligned}
\mathcal{P}(\Delta_p,\mathcal{G})&=\operatorname{Tr}[(2I - 2\Psi^+)(p \Psi^{+}+(1-p) I / 4)],\\
&=\frac{3-3p}{2}.
\end{aligned}
\end{equation}
This provides an upper bound
\begin{equation}
\begin{aligned}
C_w(\Delta_p) &\le \max\{ 1-\mathcal{P}(\Delta_p, \mathcal{G}), 0 \},\\
&=\begin{cases}
\frac{3p - 1}{2} & p \in [1/3, 1] \\
\ \ \ 0 & p \in [0, 1/3)
\end{cases}
\end{aligned}
\end{equation}
Therefore, the weight-based measure of the depolarising channels is
\begin{equation}
\begin{aligned}
C_w(\Delta_p) &=\begin{cases}
\frac{3p - 1}{2} & p \in [1/3, 1] \\
\ \ \ 0 & p \in [0, 1/3).
\end{cases}
\end{aligned}
\end{equation}

Since the Choi state of the depolarizing channel is a Werner state, our calculation for the channel simultaneously yields an independent derivation of the weight-based entanglement measure for the Werner state, thereby providing an alternative proof of the result of Lewenstein and Sanpera, who introduced it under the name \textit{best separable approximation}~\cite{lewenstein1998separability}.
\subsection{Maximal replacement channels}
A maximal replacement channel is characterized by its ability to completely erase the information of any input state $\rho$, replacing it with a predetermined maximally coherent state $\phi^+$~\cite{saxena2020dynamical,luo2025one}.

For the maximal replacement channels
 \begin{equation}
\mathcal{R}(\rho)=\operatorname{Tr}(\rho) \phi_{A_{1}}^{+},     
\end{equation} 
the corresponding Choi state is like Ref.~\cite{luo2022coherence} mentioned
\begin{equation}\label{maximal_choi_state}
\mathcal{J}_{\mathcal{R}}=\frac{I_{A_{0}}}{|A_0|} \otimes \phi_{A_{1}}^{+}
\end{equation}
where $\phi_{A_{1}}^{+}=\frac{1}{|A_1|}\sum_{i,j=1}^{|A_1|}\ket{i}\bra{j}$ is the maximally coherent state in dimension $d$.

Now, consider the Hermitian operator $W^*=2I_{A_0A_1}-|A_0|\mathcal{J}_{\mathcal{R}}=2I - (I_{A_0}\otimes\phi^+_{A_1})$.
For any channel $\mathcal{N}$, it can be verified that the following inequality holds:
\begin{equation}
    \operatorname{Tr}[\mathcal{J}_{\mathcal{N}}W^*]\ge 0, \forall\mathcal{N}\in \operatorname{CPTP}.
\end{equation}

By substituting the definition of $W^*$, the expression can be expanded as:

\begin{equation}
\begin{aligned}
\operatorname{Tr}[\mathcal{J}_{\mathcal{N}}W^*] &= \operatorname{Tr}[\mathcal{J}_{\mathcal{N}}(2I - |A_0|\mathcal{J_R}],\\
&=2\operatorname{Tr}[\mathcal{J}_{\mathcal{N}}] - |A_0|\operatorname{Tr}[\mathcal{J_R}\mathcal{J}_{\mathcal{N}}],\\
&=2\operatorname{Tr}[\mathcal{J}_{\mathcal{N}}] - \operatorname{Tr}[(I_{A_0}\otimes\phi^+_{A_1})\mathcal{J}_{\mathcal{N}}],\\
&\ge 0,
\end{aligned}
\end{equation}
where the inequality holds is due to 
$\operatorname{Tr}[AB]\le\lambda_{\max}(A)\operatorname{Tr}[B]$ holds with $A=I_{A_0}\otimes\phi^+_{A_1}$, $B=\mathcal{J}_{\mathcal{N}}$ and $\lambda_{\max}(A)$ denotes the maximal eigenvalue of matrix $A$.
The proof of this inequality is given in Appendix~\ref{sec:appdix_fd}.

For any $\operatorname{EB}$ channel $\mathcal{M}$, the operator $W^*$ satisfies the following condition: 
\begin{equation}
    \operatorname{Tr}[\mathcal{J}_{\mathcal{M}}W^*]\ge 1, \forall\mathcal{M}\in \operatorname{EB}.
\end{equation}
This property is a direct consequence of the separable structure of the Choi operator $\mathcal{J}_{\mathcal{M}}$ for any EB channel, which can be written as $\mathcal{J}_{\mathcal{M}}=\sum_kp_k\rho_k\otimes\sigma_k$. Substituting this form into the trace expression gives:

\begin{equation}
\begin{aligned}
\operatorname{Tr}[\mathcal{J}_{\mathcal{M}}W^*] &= \operatorname{Tr}[\mathcal{J}_{\mathcal{M}}(2I - (I_{A_0}\otimes\phi^+_{A_1}))],\\
&=2\operatorname{Tr}[\mathcal{J}_{\mathcal{M}}] - \operatorname{Tr}[(I_{A_0}\otimes\phi^+_{A_1})\mathcal{J}_{\mathcal{M}}],\\
&=2\operatorname{Tr}[\mathcal{J}_{\mathcal{M}}] - \sum_kp_k\operatorname{Tr[\rho_k]\operatorname{Tr}[\phi^+_{A_1}\sigma_k]},\\
&\ge 1.
\end{aligned}
\end{equation}

Since $\rho_k$ and $\sigma_k$ are quantum states, we know $\operatorname{Tr}(\rho_k)=1$ and the overlap term $\operatorname{Tr}[\phi^+_{A_1}\sigma_k]\le 1$.
Then $W^*$ satisfies the constraints in Eq.~\eqref{eq:3_4}. 
So we have 
\begin{equation}
\begin{aligned}
C_{w}(\mathcal{R}) & =1-\min\operatorname{Tr}[\mathcal{J}_{\mathcal{R}} W^*] ,\\
&= 1-(2\operatorname{Tr}[J_{\mathcal{R}}]-|A_0|\operatorname{Tr}[J_{\mathcal{R}}^{2}]),\\
&=0.
\end{aligned}
\end{equation}
Besides, we can directly see that the weight-based measure of the maximal replacement channels $C_{w}(\mathcal{R})=0$ by combining Eq.~(\ref{eq:choi_state}) with Eq.~(\ref{maximal_choi_state}).

The result that maximal replacement channels have zero channel memory can also be understood from their physical interpretation: These channels map any input state \(\rho\) to a maximally coherent state \(\phi^+_{A_1}\), meaning that all original information is lost through the channel. Indeed, by comparing with Eq.~(\ref{eq:EB_channels}), one can see that maximal replacement channels are a subset of \(\textrm{EB}\) channels.

\subsection{Stochastic damping channels}
We consider the stochastic damping channel $\mathcal{D}_p$ defined as follows, which perfectly transmits a quantum state with probability p and deterministically projects it onto the ground state $\ket{0}$ with the complementary probability $1-p$. This is a non-unitary noise model that describes a probabilistic "reset" process for a qubit~\cite{nielsen2010quantum}.

For the stochastic damping channels $\mathcal{D}_p(\rho)=p\rho+(1-p)\ket{0}\bra{0}$, the corresponding Choi state is
\begin{equation}
\mathcal{J}_{\mathcal{D}_{p}}=p \Psi^{+}+(1-p) \frac{I}{2} \otimes\ket{0}\bra{0} .
\end{equation}
and the matrix form of $\mathcal{J}_{\mathcal{D}_{p}}$ is
\begin{equation}
\mathcal{J}_{\mathcal{D}_{p}}=\left(\begin{array}{cccc}
\frac{1}{2} & 0 & 0 & \frac{p}{2} \\
0 & 0 & 0 & 0 \\
0 & 0 & \frac{1-p}{2} & 0 \\
\frac{p}{2} & 0 & 0 & \frac{p}{2}
\end{array}\right).
\end{equation}
Suppose the spectral decomposition of this matrix is 
\begin{equation}
\mathcal{J}_{\mathcal{D}_{p}}=\lambda_{0} \psi_{0}+\lambda_{1} \psi_{1}+\lambda_{2} \psi_{2}+\lambda_{3} \psi_{3},
\end{equation}
where $\psi_i=\ket{\phi_i}\bra{\phi_i}$ represent the density matrices of the corresponding pure eigenstates $\ket{\phi_i}$ and the eigenvalues are
\begin{align}
\lambda_0 &= \frac{1 + p + \sqrt{1 - 2p + 5p^2}}{4} ,\\
\lambda_1 &= \frac{1 + p - \sqrt{1 - 2p + 5p^2}}{4} ,\\
\lambda_2 &= \frac{1 - p}{2} ,\\
\lambda_3 &= 0.
\end{align}
Now, we suppose the game with the Hermitian operator $W^*=\frac{d}{d-1}(I\otimes I-\psi_0)=2I-2\psi_0$. 

We can check $W^*$ satisfies the first constraints in Eq.~\eqref{eq:3_4} because of $0\le \operatorname{Tr}[\mathcal{J}_{\mathcal{N}}\psi_0]\le 1$. As shown in Ref.~\cite{yuan2021universal}, we have the lemma that the maximum overlap between a maximally entangled state and any separable state is $\frac{1}{d}$, in which $d$ is the input dimension. Then we can prove that $\operatorname{Tr}[\mathcal{J}_{\mathcal{N}}W^*]=2(2-\operatorname{Tr}[\mathcal{J}_{\mathcal{N}}\psi_0])\ge1$. The above proves that $W^*$ satisfies all the constraints of Eq.~\eqref{eq:3_4}.

We want to calculate $\operatorname{Tr}(W^*\mathcal{J}_{\mathcal{D}_{p}})$, where $W^* = 2I - 2\psi_0$, and the spectral decomposition of $\mathcal{J}_{\mathcal{D}_{p}}$ is $\lambda_0\psi_0 + \lambda_1\psi_1 + \lambda_2\psi_2 + \lambda_3\psi_3$.

Then we substitute the expression
\begin{equation}
\operatorname{Tr}(W^*\mathcal{J}_{\mathcal{D}_{p}}) = \operatorname{Tr}((2I - 2\psi_0)(\lambda_0\psi_0 + \lambda_1\psi_1 + \lambda_2\psi_2 + \lambda_3\psi_3)).
\end{equation}
Expanding using the linearity of the trace:
\begin{align}
\operatorname{Tr}(W^*\mathcal{J}_{\mathcal{D}_{p}}) &= \sum_{i=0}^{3} \lambda_i \operatorname{Tr}[(2I - 2\psi_0)\psi_i], \\
&= \sum_{i=0}^{3} \lambda_i [2\operatorname{Tr}(I\psi_i) - 2\operatorname{Tr}(\psi_0\psi_i)].
\end{align}
Since $\psi_i$ is the density matrix of an eigenstate, $\operatorname{Tr}(\psi_i) = 1$, therefore $\operatorname{Tr}(I\psi_i) = \operatorname{Tr}(\psi_i) = 1$, thus we have
\begin{equation}
\operatorname{Tr}(W^*\mathcal{J}_{\mathcal{D}_{p}}) = \sum_{i=0}^{3} \lambda_i [2 \cdot 1 - 2\operatorname{Tr}(\psi_0\psi_i)] = 2 \sum_{i=0}^{3} \lambda_i (1 - \operatorname{Tr}(\psi_0\psi_i))
\end{equation}
Then we use the orthogonality of eigenstates: $\psi_i$ are density matrices of eigenstates corresponding to different eigenvalues, and they are mutually orthogonal. Therefore, we can get 
\begin{equation}
\operatorname{Tr}(W^*\mathcal{J}_{\mathcal{D}_{p}}) = 2(1 - \lambda_0).
\end{equation}
And finally we obtain the upper bound 
\begin{equation}
C_w(\mathcal{D}_p)\le1-\operatorname{Tr}(W^*\mathcal{J}_{\mathcal{D}_{p}})=2\lambda_0-1=\frac{\sqrt{1-2p+5p^2}+p-1}{2}.
\end{equation}

\subsection{Erasure channels}
The erasure channel is a key noise model in quantum information theory, in this channel, a quantum state is either transmitted without error or it is replaced by a known erasure state, which is orthogonal to the computational basis. This property of knowing exactly where an error has occurred greatly simplifies the process of error correction~\cite{bennett1997capacities,nielsen2010quantum}.

For the erasure channels $\mathcal{E}_{p}(\rho)=p \rho+(1-p)\ket{2}\bra{2}$ with $\ket{2}$ orthogonal to $\{\ket{0},\ket{1}\}$, the Choi state is 
\begin{equation}
\mathcal{J}_{\mathcal{E}_p} = p\Psi^++(1-p)\frac{I}{2}\otimes\ket{2}\bra{2}.
\end{equation}
Now, we suppose the game with the Hermitian operator $W^*=2I-2\Psi^+-I\otimes\ket{2}\bra{2}$. 

We can check $\operatorname{Tr}[\mathcal{J}_{\mathcal{N}}W^*]\ge0$ due to the trace of the product of two positive semidefinite operators is non-negative. For all separable states $\sigma$, consider a projective measurement $\{P_{01}=\ket{0}\bra{0}+\ket{1}\bra{1},P_2=\ket{2}\bra{2}\}$ on the second system, then
\begin{equation}
\begin{aligned}
\operatorname{Tr}[W^* \sigma] &= \operatorname{Tr}[\left(P_{01}+P_{2}\right) W^*\left(P_{01}+P_{2}\right) \sigma] ,\\
&= \operatorname{Tr}[(2P_{01}-2\Psi^+)\sigma] + \operatorname{Tr}[P_2\sigma] ,\\
&\quad -2\operatorname{Tr}[P_{01}\Psi^+P_{01}\sigma]+\operatorname{Tr}[P_2\sigma P_2] ,\\
&= 2p_{01}-2p_{01}\operatorname{Tr}[\Psi^+\sigma_{01}]+p_2 ,\\
&= 2p_{01}(1-\operatorname{Tr}[\Psi^+\sigma_{01}])+p_2 ,\\
&\ge 1.
\end{aligned}
\end{equation}
Here $p_{01}=\operatorname{Tr}[P_{01}\sigma_{01} P_{01}], p_2=\operatorname{Tr}[P_2\sigma_2P_2], \sigma_{01}=P_{01}\sigma P_{01}/p_{01}$. The last line follows from $2\operatorname{Tr}[\Psi^+\sigma_{01}]\le 1$ and $p_{01}+p_2=1$. Thus, the above proves that $W^*$ satisfies all the constraints of Eq.~\eqref{eq:3_4}.
Then we need to compute
\begin{equation}
\begin{aligned}
\operatorname{Tr}[W^*\mathcal{J}_{\mathcal{E}_p}]
&= \operatorname{Tr}\Big[ \Big(2I - 2\Psi^+ - I\otimes\ket{2}\bra{2}\Big) \\
&\qquad \qquad \cdot \Big(p\Psi^+ + (1-p)\frac{I}{2}\otimes\ket{2}\bra{2}\Big) \Big] \\
&= 2p + 2 - 2p - 2p + p - 1 \\
&= 1-p.
\end{aligned}
\end{equation}
Finally we obtain the calculation
\begin{equation}
C_w(\mathcal{E}_p)\le1-\operatorname{Tr}[W^*\mathcal{J}_{\mathcal{E}_p}]=p.
\end{equation}
We can also prove that the estimate from the game witness is tight, which is equivalent to finding a separable Choi state $\Phi^+_{\mathcal{M}}$ that satisfies
\begin{equation}
\mathcal{J}_{\mathcal{E}_{p}} \ge (1-p) \Phi_{\mathcal{M}}^{+}.
\end{equation}
This is satisfied by choosing $\Phi^+_{\mathcal{M}}=\frac{I}{2}\otimes\ket{2}\bra{2}$. Therefore, the weight-based measure of the erasure channels is $p$.

We find that the weight-based measure for the depolarizing channels, stochastic damping channels, and erasure channels is numerically equal to their corresponding robustness measures~\cite{yuan2021universal}. The weight-based measure of these relative channels is shown in Fig.~\ref{fig:chart}.

\section{Discussion and Conclusion}\label{sec:conclusion}
In this paper, we proposed a quantification approach for quantum memory based on weight, along with its operational interpretation. We first introduced the definition of the weight-based of a quantum memory, which involves optimization over all input states, and proved that this optimization can be achieved via the memory’s Choi matrix. Consequently, the problem of computing the channel weight can be reduced to a SDP. We then studied some properties such as convexity and monotonicity, and provided an upper bound for the weight-based measure of two channels under the tensor product. 
Then, we establish a general lower bound for the weight-based measure of quantum memory and extend to a tighter lower bound for single-qubit memory. Besides, we further derive the relation between the weight-based measure of quantum memory $C_w$ and the robustness of quantum memory $C_r$. By introducing an nonlocal exclusion task, we demonstrated the operational advantage of measure in this setting. We examine the unavoidable lower bounds on the error incurred when distilling or purifying an arbitrary noisy channel into an ideal unitary channel or a replacement channel under the constraints of free superchannels, thereby revealing the close connection between the weight-based quantifier of quantum memories and the performance of these purification tasks. Finally, we calculated some special memory with weight-based measure. 
Besides, by calculating the measure for the depolarising channel, we provide an alternative proof for the best separable approximation of the Werner state~\cite{lewenstein1998separability,Regula2017convex}. This result is in agreement with the established separability criterion for Werner states, as demonstrated by Lewenstein and Sanpera~\cite{lewenstein1998separability}.

Furthermore, we observe an interesting phenomenon when comparing the quantum memory measure with the quantum capacity. For the the depolarizing channel, as an example in Sec.~\ref{subsec:depolarising}, there exists a parameter range of $\frac{1}{3}<p\le\frac{2}{3}$, where the quantum memory is positive 
($C_w(\Delta_p)>0$), yet the quantum capacity $\mathrm{Q}$ is known to be zero~\cite{bruss1998optimal,bennett1997capacities}.
As shown in Table \ref{tab:q_vs_cw_comparison}, there is a regime where the quantum capacity is zero while the quantum memory measure is positive.
According to their respective definitions, these are two different measures. The quantum memory measure quantifies the channel's ability to resist noise while preserving its quantum nature. In contrast, the quantum capacity defines the maximum rate at which quantum information can be transmitted reliably after applying quantum error correction. We conjecture that this phenomenon may suggest that in this range, the channel retains its quantum nature but fails to meet the standard required for effective quantum transmission.
\begin{table}[h!]
\centering
\begin{tabular}{l@{\hspace{2cm}}c@{\hspace{2cm}}c}
\toprule
Parameter Range   & $\mathrm{Q}$ & $C_w$ \\
\midrule
$p \le \frac{1}{3}$ & $0$ & $0$ \\
\\ 
$\frac{1}{3} < p \le \frac{2}{3}$ & $0$ & $> 0$ \\
\\ 
$\frac{2}{3} < p$ & $> 0$ & $> 0$ \\
\bottomrule
\end{tabular}
\caption{Comparison of Quantum Capacity ($\mathrm{Q}$) and Quantum Memory Measure ($C_w$) for the Depolarizing Channel.}\label{tab:q_vs_cw_comparison}
\end{table}

Although we have not established a direct relationship between the quantum memory measure and quantum capacity, this presents an open problem worthy of exploration. We propose this as a valuable direction for future work and hope the findings presented in this paper will contribute to a deeper understanding of this topic.

\begin{acknowledgments}
This work is supported by the National Natural Science Foundation of China Grant No. 62001274 and No. 62171266 and the Fundamental Research Funds for the Central Universities (GK202501008).
\end{acknowledgments}

\sloppy
\makeatletter
\makeatother
\bibliographystyle{unsrt}

\bibliography{main}

@article{nicolas2014quantum,
  title={A quantum memory for orbital angular momentum photonic qubits},
  author={Nicolas, A and Veissier, L and Giner, L and Giacobino, E and Maxein, D and Laurat, J},
  journal={Nature Photonics},
  volume={8},
  number={3},
  pages={234--238},
  year={2014},
  publisher={Nature Publishing Group UK London}
}

@article{freer2017single,
  title={A single-atom quantum memory in silicon},
  author={Freer, Solomon and Simmons, Stephanie and Laucht, Arne and Muhonen, Juha T and Dehollain, Juan P and Kalra, Rachpon and Mohiyaddin, Fahd A and Hudson, Fay E and Itoh, Kohei M and McCallum, Jeffrey C and others},
  journal={Quantum Science and Technology},
  volume={2},
  number={1},
  pages={015009},
  year={2017},
  publisher={IOP Publishing}
}

@article{wang2019efficient,
  title={Efficient quantum memory for single-photon polarization qubits},
  author={Wang, Yunfei and Li, Jianfeng and Zhang, Shanchao and Su, Keyu and Zhou, Yiru and Liao, Kaiyu and Du, Shengwang and Yan, Hui and Zhu, Shi-Liang},
  journal={Nature Photonics},
  volume={13},
  number={5},
  pages={346--351},
  year={2019},
  publisher={Nature Publishing Group UK London}
}

@article{hosseini2011high,
  title={High efficiency coherent optical memory with warm rubidium vapour},
  author={Hosseini, Mahdi and Sparkes, Ben M and Campbell, Geoff and Lam, Ping K and Buchler, Ben C},
  journal={Nature Communications},
  volume={2},
  number={1},
  pages={174},
  year={2011},
  publisher={Nature Publishing Group UK London}
}

@article{hsiao2018highly,
  title={Highly efficient coherent optical memory based on electromagnetically induced transparency},
  author={Hsiao, Ya-Fen and Tsai, Pin-Ju and Chen, Hung-Shiue and Lin, Sheng-Xiang and Hung, Chih-Chiao and Lee, Chih-Hsi and Chen, Yi-Hsin and Chen, Yong-Fan and Yu, Ite A and Chen, Ying-Cheng},
  journal={Physical Review Letters},
  volume={120},
  number={18},
  pages={183602},
  year={2018},
  publisher={APS}
}

@article{vernaz2018highly,
  title={Highly-efficient quantum memory for polarization qubits in a spatially-multiplexed cold atomic ensemble},
  author={Vernaz-Gris, Pierre and Huang, Kun and Cao, Mingtao and Sheremet, Alexandra S and Laurat, Julien},
  journal={Nature Communications},
  volume={9},
  number={1},
  pages={363},
  year={2018},
  publisher={Nature Publishing Group UK London}
}

@article{lvovsky2009optical,
  title={Optical quantum memory},
  author={Lvovsky, Alexander I and Sanders, Barry C and Tittel, Wolfgang},
  journal={Nature Photonics},
  volume={3},
  number={12},
  pages={706--714},
  year={2009},
  publisher={Nature Publishing Group UK London}
}

@article{goronkin2004high,
  title={High-performance emerging solid-state memory technologies},
  author={Goronkin, Herb and Yang, Yang},
  journal={MRS bulletin},
  volume={29},
  number={11},
  pages={805--813},
  year={2004},
  publisher={Cambridge University Press}
}

@article{vieira2024entanglement,
  title={Entanglement-breaking channels are a quantum memory resource},
  author={Vieira, Lucas B and Ku, Huan-Yu and Budroni, Costantino},
  journal={arXiv preprint arXiv:2402.16789},
  year={2024}
}

@article{simnacher2019certifying,
  title={Certifying quantum memories with coherence},
  author={Simnacher, Timo and Wyderka, Nikolai and Spee, Cornelia and Yu, Xiao-Dong and G{\"u}hne, Otfried},
  journal={Physical Review A},
  volume={99},
  number={6},
  pages={062319},
  year={2019},
  publisher={APS}
}

@article{yuan2021universal,
  title={Universal and operational benchmarking of quantum memories},
  author={Yuan, Xiao and Liu, Yunchao and Zhao, Qi and Regula, Bartosz and Thompson, Jayne and Gu, Mile},
  journal={npj Quantum Information},
  volume={7},
  number={1},
  pages={108},
  year={2021},
  publisher={Nature Publishing Group UK London}
}

@article{chang2024visually,
  title={Visually quantifying single-qubit quantum memory},
  author={Chang, Wan-Guan and Ju, Chia-Yi and Chen, Guang-Yin and Chen, Yueh-Nan and Ku, Huan-Yu},
  journal={Physical Review Research},
  volume={6},
  number={2},
  pages={023035},
  year={2024},
  publisher={APS}
}

@book{nielsen2010quantum,
  title={Quantum Computation and Quantum Information},
  author={Nielsen, Michael A and Chuang, Isaac L},
  year={2010},
  publisher={Cambridge University Press}
}

@article{wootters1982single,
  title={A single quantum cannot be cloned},
  author={Wootters, William K and Zurek, Wojciech H},
  journal={Nature},
  volume={299},
  number={5886},
  pages={802--803},
  year={1982},
  publisher={Nature Publishing Group UK London}
}

@article{heshami2016quantum,
  title={Quantum memories: emerging applications and recent advances},
  author={Heshami, Khabat and England, Duncan G and Humphreys, Peter C and Bustard, Philip J and Acosta, Victor M and Nunn, Joshua and Sussman, Benjamin J},
  journal={Journal of Modern Optics},
  volume={63},
  number={20},
  pages={2005--2028},
  year={2016},
  publisher={Taylor \& Francis}
}

@article{ku2022quantifying,
  title={Quantifying quantumness of channels without entanglement},
  author={Ku, Huan-Yu and Kadlec, Josef and {\v{C}}ernoch, Anton{\'\i}n and Quintino, Marco T{\'u}lio and Zhou, Wenbin and Lemr, Karel and Lambert, Neill and Miranowicz, Adam and Chen, Shin-Liang and Nori, Franco and others},
  journal={PRX Quantum},
  volume={3},
  number={2},
  pages={020338},
  year={2022},
  publisher={APS}
}

@article{uola2020all,
  title={All quantum resources provide an advantage in exclusion tasks},
  author={Uola, Roope and Bullock, Tom and Kraft, Tristan and Pellonp{\"a}{\"a}, Juha-Pekka and Brunner, Nicolas},
  journal={Physical Review Letters},
  volume={125},
  number={11},
  pages={110402},
  year={2020},
  publisher={APS}
}

@article{ducuara2020operational,
  title={Operational interpretation of weight-based resource quantifiers in convex quantum resource theories},
  author={Ducuara, Andr{\'e}s F and Skrzypczyk, Paul},
  journal={Physical Review Letters},
  volume={125},
  number={11},
  pages={110401},
  year={2020},
  publisher={APS}
}

@article{streltsov2010linking,
  title={Linking a distance measure of entanglement to its convex roof},
  author={Streltsov, Alexander and Kampermann, Hermann and Bru{\ss}, Dagmar},
  journal={New Journal of Physics},
  volume={12},
  number={12},
  pages={123004},
  year={2010},
  publisher={IOP Publishing}
}

@article{bu2018asymmetry,
  title={Asymmetry and coherence weight of quantum states},
  author={Bu, Kaifeng and Anand, Namit and Singh, Uttam},
  journal={Physical Review A},
  volume={97},
  number={3},
  pages={032342},
  year={2018},
  publisher={APS}
}

@article{horodecki2003entanglement,
  title={Entanglement breaking channels},
  author={Horodecki, Michael and Shor, Peter W and Ruskai, Mary Beth},
  journal={Reviews in Mathematical Physics},
  volume={15},
  number={06},
  pages={629--641},
  year={2003},
  publisher={World Scientific}
}

@article{tabia2024super,
  title={Super-activating quantum memory by entanglement-breaking channels},
  author={Tabia, Gelo Noel M and Hsieh, Chung-Yun},
  journal={arXiv preprint arXiv:2410.13499},
  year={2024}
}

@article{buscemi2011entanglement,
  title={Entanglement cost in practical scenarios},
  author={Buscemi, Francesco and Datta, Nilanjana},
  journal={Physical Review Letters},
  volume={106},
  number={13},
  pages={130503},
  year={2011},
  publisher={APS}
}

@article{brassard2005quantum,
  title={Quantum pseudo-telepathy},
  author={Brassard, Gilles and Broadbent, Anne and Tapp, Alain},
  journal={Foundations of Physics},
  volume={35},
  number={11},
  pages={1877--1907},
  year={2005},
  publisher={Springer}
}

@article{gottesman2001quantum,
  title={Quantum digital signatures},
  author={Gottesman, Daniel and Chuang, Isaac},
  journal={arXiv preprint quant-ph/0105032},
  year={2001}
}

@article{bennett2014quantum,
  title={Quantum cryptography: Public key distribution and coin tossing},
  author={Bennett, Charles H and Brassard, Gilles},
  journal={Theoretical Computer Science},
  volume={560},
  pages={7--11},
  year={2014},
  publisher={Elsevier}
}

@article{pirandola2020advances,
  title={Advances in quantum cryptography},
  author={Pirandola, Stefano and Andersen, Ulrik L and Banchi, Leonardo and Berta, Mario and Bunandar, Darius and Colbeck, Roger and Englund, Dirk and Gehring, Tobias and Lupo, Cosmo and Ottaviani, Carlo and others},
  journal={Advances in Optics and Photonics},
  volume={12},
  number={4},
  pages={1012--1236},
  year={2020},
  publisher={Optical Society of America}
}

@ARTICLE{10236453,
  author={Piveteau, Christophe and Sutter, David},
  journal={IEEE Transactions on Information Theory}, 
  title={Circuit Knitting With Classical Communication}, 
  year={2024},
  volume={70},
  number={4},
  pages={2734-2745},
  keywords={Logic gates;Computers;Quantum computing;Qubit;Protocols;Hardware;Task analysis;LOCC;robustness of entanglement;quasiprobability decomposition;circuit cutting},
  doi={10.1109/TIT.2023.3310797}
}

@ARTICLE{11131294,
  author={Brenner, Lukas and Piveteau, Christophe and Sutter, David},
  journal={IEEE Transactions on Information Theory}, 
  title={Optimal Wire Cutting With Classical Communication}, 
  year={2025},
  volume={71},
  number={10},
  pages={7742-7752},
  keywords={Wire;Logic gates;Circuits;Quantum computing;Computers;Qubit;Costs;Quantum circuit;Training;Prevention and mitigation;Quasiprobability simulation;gate teleportation;circuit cutting},
  doi={10.1109/TIT.2025.3601047}
}

@article{jing2025circuit,
  title={Circuit knitting facing exponential sampling-overhead scaling bounded by entanglement cost},
  author={Jing, Mingrui and Zhu, Chengkai and Wang, Xin},
  journal={Physical Review A},
  volume={111},
  number={1},
  pages={012433},
  year={2025},
  publisher={APS}
}

@article{piveteau2025circuit,
  title={Circuit cutting with classical side information},
  author={Piveteau, Christophe and Schmitt, Lukas and Sutter, David},
  journal={Physical Review Research},
  volume={7},
  number={3},
  pages={033063},
  year={2025},
  publisher={APS}
}

@article{piveteau2025simulating,
  title={Simulating quantum circuits with restricted quantum computers},
  author={Piveteau, Christophe},
  journal={arXiv preprint arXiv:2503.21773},
  year={2025}
}

@article{gour2019comparison,
  title={Comparison of quantum channels by superchannels},
  author={Gour, Gilad},
  journal={IEEE Transactions on Information Theory},
  volume={65},
  number={9},
  pages={5880--5904},
  year={2019},
  publisher={IEEE}
}

@article{luo2025one,
  title={One-shot manipulation of coherence in dynamic quantum resource theory},
  author={Luo, Yu},
  journal={Physical Review A},
  volume={111},
  number={2},
  pages={022447},
  year={2025},
  publisher={APS}
}

@article{chitambar2019quantum,
  title={Quantum resource theories},
  author={Chitambar, Eric and Gour, Gilad},
  journal={Reviews of Modern Physics},
  volume={91},
  number={2},
  pages={025001},
  year={2019},
  publisher={APS}
}

@article{luo2022coherence,
  title={Coherence weight of quantum channels},
  author={Luo, Yu and Ye, Mingfei and Li, Yongming},
  journal={Physica A: Statistical Mechanics and its Applications},
  volume={599},
  pages={127510},
  year={2022},
  publisher={Elsevier}
}

@article{saxena2020dynamical,
  title={Dynamical resource theory of quantum coherence},
  author={Saxena, Gaurav and Chitambar, Eric and Gour, Gilad},
  journal={Physical Review Research},
  volume={2},
  number={2},
  pages={023298},
  year={2020},
  publisher={APS}
}

@article{liu2020operational,
  title={Operational resource theory of quantum channels},
  author={Liu, Yunchao and Yuan, Xiao},
  journal={Physical Review Research},
  volume={2},
  number={1},
  pages={012035},
  year={2020},
  publisher={APS}
}

@article{luo2024epsilon,
  title={Epsilon measures of state-based quantum resource theory},
  author={Luo, Yu and Meng, Fanxu and Wang, Youle},
  journal={Physical Review A},
  volume={109},
  number={5},
  pages={052413},
  year={2024},
  publisher={APS}
}

@article{liu2019resource,
  title={Resource theories of quantum channels and the universal role of resource erasure},
  author={Liu, Zi-Wen and Winter, Andreas},
  journal={arXiv preprint arXiv:1904.04201},
  year={2019}
}

@article{wilde2020amortized,
  title={Amortized channel divergence for asymptotic quantum channel discrimination},
  author={Wilde, Mark M and Berta, Mario and Hirche, Christoph and Kaur, Eneet},
  journal={Letters in Mathematical Physics},
  volume={110},
  pages={2277--2336},
  year={2020},
  publisher={Springer}
}

@article{duan2001long,
  title={Long-distance quantum communication with atomic ensembles and linear optics},
  author={Duan, L-M and Lukin, Mikhail D and Cirac, J Ignacio and Zoller, Peter},
  journal={Nature},
  volume={414},
  number={6862},
  pages={413--418},
  year={2001},
  publisher={Nature Publishing Group UK London}
}

@article{watrous2009semidefinite,
  title={Semidefinite programs for completely bounded norms},
  author={Watrous, John},
  journal={arXiv preprint arXiv:0901.4709},
  year={2009}
}

@article{zhong2015optically,
  title={Optically addressable nuclear spins in a solid with a six-hour coherence time},
  author={Zhong, Manjin and Hedges, Morgan P and Ahlefeldt, Rose L and Bartholomew, John G and Beavan, Sarah E and Wittig, Sven M and Longdell, Jevon J and Sellars, Matthew J},
  journal={Nature},
  volume={517},
  number={7533},
  pages={177--180},
  year={2015},
  publisher={Nature Publishing Group UK London}
}

@article{ye2023quantifying,
  title={Quantifying channel coherence via the norm distance},
  author={Ye, Mingfei and Luo, Yu and Li, Yongming},
  journal={Journal of Physics A: Mathematical and Theoretical},
  volume={57},
  number={1},
  pages={015307},
  year={2023},
  publisher={IOP Publishing}
}

@article{vidal1999robustness,
  title={Robustness of entanglement},
  author={Vidal, Guifr{\'e} and Tarrach, Rolf},
  journal={Physical Review A},
  volume={59},
  number={1},
  pages={141},
  year={1999},
  publisher={APS}
}

@article{wang2017single,
  title={Single-qubit quantum memory exceeding ten-minute coherence time},
  author={Wang, Ye and Um, Mark and Zhang, Junhua and An, Shuoming and Lyu, Ming and Zhang, Jing-Ning and Duan, L-M and Yum, Dahyun and Kim, Kihwan},
  journal={Nature Photonics},
  volume={11},
  number={10},
  pages={646--650},
  year={2017},
  publisher={Nature Publishing Group UK London}
}

@article{julsgaard2004experimental,
  title={Experimental demonstration of quantum memory for light},
  author={Julsgaard, Brian and Sherson, Jacob and Cirac, J Ignacio and Fiur{\'a}{\v{s}}ek, Jarom{\'\i}r and Polzik, Eugene S},
  journal={Nature},
  volume={432},
  number={7016},
  pages={482--486},
  year={2004},
  publisher={Nature Publishing Group UK London}
}

@article{choi2008mapping,
  title={Mapping photonic entanglement into and out of a quantum memory},
  author={Choi, Kyung Soo and Deng, Hui and Laurat, Julien and Kimble, HJ},
  journal={Nature},
  volume={452},
  number={7183},
  pages={67--71},
  year={2008},
  publisher={Nature Publishing Group UK London}
}

@article{zhao2009long,
  title={Long-lived quantum memory},
  author={Zhao, R and Dudin, YO and Jenkins, SD and Campbell, CJ and Matsukevich, DN and Kennedy, TAB and Kuzmich, A},
  journal={Nature Physics},
  volume={5},
  number={2},
  pages={100--104},
  year={2009},
  publisher={Nature Publishing Group UK London}
}

@article{jensen2011quantum,
  title={Quantum memory for entangled continuous-variable states},
  author={Jensen, Kasper and Wasilewski, Wojciech and Krauter, Hanna and Fernholz, Thomas and Nielsen, Bo Melholt and Owari, M and Plenio, Martin B and Serafini, A and Wolf, MM and Polzik, ES},
  journal={Nature Physics},
  volume={7},
  number={1},
  pages={13--16},
  year={2011},
  publisher={Nature Publishing Group UK London}
}

@article{pu2017experimental,
  title={Experimental realization of a multiplexed quantum memory with 225 individually accessible memory cells},
  author={Pu, YF and Jiang, Nan and Chang, Wei and Yang, HX and Li, Chang and Duan, LM},
  journal={Nature communications},
  volume={8},
  number={1},
  pages={15359},
  year={2017},
  publisher={Nature Publishing Group UK London}
}

@article{lei2023quantum,
  title={Quantum optical memory for entanglement distribution},
  author={Lei, Yisheng and Kimiaee Asadi, Faezeh and Zhong, Tian and Kuzmich, Alex and Simon, Christoph and Hosseini, Mahdi},
  journal={Optica},
  volume={10},
  number={11},
  pages={1511--1528},
  year={2023},
  publisher={Optica Publishing Group}
}

@article{terhal2015quantum,
  title={Quantum error correction for quantum memories},
  author={Terhal, Barbara M},
  journal={Reviews of Modern Physics},
  volume={87},
  number={2},
  pages={307--346},
  year={2015},
  publisher={APS}
}

@article{ji2024incompatibility,
  title={Incompatibility as a resource for programmable quantum instruments},
  author={Ji, Kaiyuan and Chitambar, Eric},
  journal={PRX Quantum},
  volume={5},
  number={1},
  pages={010340},
  year={2024},
  publisher={APS}
}

@article{zhang2017quantum,
  title={Quantum secure direct communication with quantum memory},
  author={Zhang, Wei and Ding, Dong-Sheng and Sheng, Yu-Bo and Zhou, Lan and Shi, Bao-Sen and Guo, Guang-Can},
  journal={Physical review letters},
  volume={118},
  number={22},
  pages={220501},
  year={2017},
  publisher={APS}
}

@article{baumgratz2014quantifying,
  title={Quantifying coherence},
  author={Baumgratz, Tillmann and Cramer, Marcus and Plenio, Martin B},
  journal={Physical review letters},
  volume={113},
  number={14},
  pages={140401},
  year={2014},
  publisher={APS}
}

@article{rosset2018resource,
  title={Resource theory of quantum memories and their faithful verification with minimal assumptions},
  author={Rosset, Denis and Buscemi, Francesco and Liang, Yeong-Cherng},
  journal={Physical Review X},
  volume={8},
  number={2},
  pages={021033},
  year={2018},
  publisher={APS}
}

@article{regula2021fundamental,
  title={Fundamental limitations on distillation of quantum channel resources},
  author={Regula, Bartosz and Takagi, Ryuji},
  journal={Nature Communications},
  volume={12},
  number={1},
  pages={4411},
  year={2021},
  publisher={Nature Publishing Group UK London}
}

@article{fang2022no,
  title={No-go theorems for quantum resource purification: New approach and channel theory},
  author={Fang, Kun and Liu, Zi-Wen},
  journal={PRX Quantum},
  volume={3},
  number={1},
  pages={010337},
  year={2022},
  publisher={APS}
}

@article{hedges2010efficient,
  title={Efficient quantum memory for light},
  author={Hedges, Morgan P and Longdell, Jevon J and Li, Yongmin and Sellars, Matthew J},
  journal={Nature},
  volume={465},
  number={7301},
  pages={1052--1056},
  year={2010},
  publisher={Nature Publishing Group UK London}
}

@article{choi2023unital,
  title={On unital qubit channels.},
  author={Choi, Man-Duen and Li, Chi-Kwong},
  journal={Quantum Inf. Comput.},
  volume={23},
  number={7\&8},
  pages={562--576},
  year={2023}
}

@book{wilde2013quantum,
  title={Quantum Information Theory},
  author={Wilde, Mark M},
  year={2013},
  publisher={Cambridge University Press}
}

@article{lewenstein1998separability,
  title={Separability and entanglement of composite quantum systems},
  author={Lewenstein, Maciej and Sanpera, Anna},
  journal={Physical Review Letters},
  volume={80},
  number={11},
  pages={2261},
  year={1998},
  publisher={APS}
}

@article{bennett1997capacities,
  title={Capacities of quantum erasure channels},
  author={Bennett, Charles H and DiVincenzo, David P and Smolin, John A},
  journal={Physical Review Letters},
  volume={78},
  number={16},
  pages={3217},
  year={1997},
  publisher={APS}
}

@article{streltsov2017colloquium,
  title={Colloquium: Quantum coherence as a resource},
  author={Streltsov, Alexander and Adesso, Gerardo and Plenio, Martin B},
  journal={Reviews of Modern Physics},
  volume={89},
  number={4},
  pages={041003},
  year={2017},
  publisher={APS}
}

@article{horodecki2009quantum,
  title={Quantum entanglement},
  author={Horodecki, Ryszard and Horodecki, Pawe{\l} and Horodecki, Micha{\l} and Horodecki, Karol},
  journal={Reviews of Modern Physics},
  volume={81},
  number={2},
  pages={865--942},
  year={2009},
  publisher={APS}
}

@article{Regula2017convex,
  title={Convex geometry of quantum resource quantification},
  author={Regula, Bartosz},
  journal={Journal of Physics A: Mathematical and Theoretical},
  volume={51},
  number={4},
  pages={045303},
  year={2017},
  publisher={IOP Publishing}
}

@article{bruss1998optimal,
  title={Optimal universal and state-dependent quantum cloning},
  author={Bru{\ss}, Dagmar and DiVincenzo, David P and Ekert, Artur and Fuchs, Christopher A and Macchiavello, Chiara and Smolin, John A},
  journal={Physical Review A},
  volume={57},
  number={4},
  pages={2368},
  year={1998},
  publisher={APS}
}

@article{takagi2019general,
  title={General resource theories in quantum mechanics and beyond: Operational characterization via discrimination tasks},
  author={Takagi, Ryuji and Regula, Bartosz},
  journal={Physical Review X},
  volume={9},
  number={3},
  pages={031053},
  year={2019},
  publisher={APS}
}

@article{chen2014comparison,
  title={Comparison of different definitions of the geometric measure of entanglement},
  author={Chen, Lin and Aulbach, Martin and Hajdu{\v{s}}ek, Michal},
  journal={Physical Review A},
  volume={89},
  number={4},
  pages={042305},
  year={2014},
  publisher={APS}
}

\fussy
\appendix
\section{Dual of the weight-based measure of channel}\label{sec:appdix_fa}
For a channel $\mathcal{N}$, the weight-based measure of quantum memory can be defined as
\begin{equation}\label{eq:a1}
C_w(\mathcal{N}) = \min\{s \ge 0 : \mathcal{N} \ge (1-s)\mathcal{M}, \mathcal{M} \in \operatorname{EB}\}.
\end{equation}
Equivalently, the weight-based measure can also be written via Choi matrix:
\begin{equation}\label{eq:a2}
C_w(\mathcal{N}) = \min\{s \ge 0 : \mathcal{J}_{\mathcal{N}} \ge (1-s)\mathcal{J}_{\mathcal{M}}, \mathcal{M} \in \operatorname{EB}\}.
\end{equation}
This can be recast as
\begin{equation}
\begin{aligned}
C_w(\mathcal{N})= & \min 1-\operatorname{Tr}[x_1] \\
\text { s.t. } & x_1+x_2=\mathcal{J}_N , \\
& x_1\in \operatorname{cone}(\mathfrak{F}),x_2\in\operatorname{cone}(\mathcal{V}) ,\\
\end{aligned}
\end{equation}
where we use $\mathfrak{F}$ to represent separable Choi states and $\mathcal{V}$ to represent bipartite Choi states, while $\operatorname{cone}(\mathfrak{F})$ and $\operatorname{cone}(\mathcal{V})$ represent their unnormalised versions.
Now define $W = \operatorname{cone}(\mathcal{V})\oplus\operatorname{cone}(\mathcal{V})$,$\mathcal{W'}=\mathcal{V}$,$\mathcal{K}=\operatorname{cone}(\mathfrak{F})\oplus\operatorname{cone}(\mathcal{V})$,$\Lambda(x_1\oplus x_2)=x_1+x_2$,$A=I\oplus0$,$y=\mathcal{J}_{\mathcal{N}}$. 

We can define the Lagrangian in the same formalRef.~\cite{takagi2019general} as
\begin{equation}\label{eq:a4}
\begin{aligned}
L(x;Q,Z)=1-\langle A,x\rangle-\langle Z,y-\Lambda(x)\rangle-\langle Q,x \rangle.
\end{aligned}
\end{equation}
where $Q\in\mathcal{W^*},Z\in\mathcal{W^{'*}}$ are the so-called Lagrange multipliers.
This allows us to write
\begin{equation}\label{eq:a5}
\begin{aligned}
\sup_{Q\in\mathcal{K^*},
Z\in W^{'*}} L(x ; Q, Z)=\left\{\begin{array}{ll}
1-\langle A, x\rangle & \text { if } \Lambda(x)=y \text { and } x \in \mathcal{K}, \\
\infty & \text { otherwise },
\end{array}\right.
\end{aligned}
\end{equation}
Noticing that
\begin{equation}\label{eq:a6}
\begin{aligned}
\inf _{x \in W} L(x ; Q, Z) & =\inf _{x \in W} 1-\langle Z, y\rangle +\langle \Lambda^{*}(Z)-A-Q, x\rangle ,\\
& =\left\{\begin{array}{ll}
1-\langle Z, y\rangle & \text { if } \Lambda^{*}(Z)-A-Q=0, \\
-\infty & \text { otherwise },
\end{array}\right.
\end{aligned}
\end{equation}
and the dual problem is
\begin{equation}\label{eq:a7}
\begin{aligned}
d & = \sup _{\substack{Q \in \mathfrak{F}^{*} \\ Z \in W^{\prime *}}} \inf _{x \in W} L(x ; Q, Z),\\
& =\sup \left\{1-\langle Z, y\rangle \mid \Lambda^{*}(Z)-A-Q=0, Q \in \mathcal{K}^{*}\right\},\\
& =\sup \left\{1-\langle Z, y\rangle \mid \Lambda^{*}(Z)-A \in \mathcal{K}^{*}\right\} .
\end{aligned}
\end{equation}
Then rewrite Eq.~\eqref{eq:a7} as:
\begin{equation}
\begin{aligned}
C_w(\mathcal{N})= & \max 1-\operatorname{Tr}[W\mathcal{J}_{\mathcal{N}}] ,\\
\text { s.t. } & \langle \Lambda^*(W)-I\oplus0,k\rangle\ge 0,\forall k\in \mathcal{K} , \\
& W\in \mathcal{W'} , \\
\end{aligned}
\end{equation}
where 
\begin{equation}
\begin{aligned}
\left\langle \Lambda^*(W)-I\oplus0,k\right\rangle &\ge 0 ,\forall k \in \mathcal{K} 
\Longleftrightarrow\\
&\operatorname{Tr}[Wx_1]+\operatorname{Tr}[Wx_2]\ge\operatorname{Tr}[x_1],\\
&\forall x_1\in\operatorname{cone}(\mathfrak{F}),\forall x_2 \in\operatorname{cone}(\mathcal{V}).
\end{aligned} 
\end{equation}
Note that the above condition is further equivalent to 
\begin{equation}
\begin{aligned}
\left(\operatorname{Tr}\left[W x_{1}\right] \ge \operatorname{Tr}\left[x_{1}\right], \forall x_{1} \in \operatorname{cone}(\mathfrak{F})\right) \wedge\\
\left(\operatorname{Tr}\left[W x_{2}\right] \geq 0, \forall x_{2} \in \operatorname{cone}(\mathcal{V})\right).
\end{aligned}
\end{equation}
We can set $x_2=0$ and $X_1=0$ to get the above results respectively. Therefore, the dual form can be written as
\begin{equation}\label{eq:a11}
\begin{aligned}
C_w(\mathcal{N})= & \max 1-\operatorname{Tr}\left[W \mathcal{J}_{\mathcal{N}}\right]\\
\text {s.t. } & \operatorname{Tr}\left[W x_{1}\right] \ge \operatorname{Tr}\left[x_{1}\right], \forall x_{1} \in \mathfrak{F} ,\\
& \operatorname{Tr}\left[W x_{2}\right] \geq 0, \forall x_{2} \in \mathcal{V} ,\\
& W \in W^{*},
\end{aligned}
\end{equation}
which can be expressed as
\begin{equation}
\begin{aligned}
C_w(\mathcal{N})= & \max 1-\operatorname{Tr}\left[W \mathcal{J}_{\mathcal{N}}\right] \\
\text {s.t. } & W^{\dagger}=W, \\
& \operatorname{Tr}[\mathcal{J}_{\mathcal{N}}W]\ge0,\\
& \operatorname{Tr}[\mathcal{J}_{\mathcal{M}}W]\ge1, \\
& \forall\mathcal{N}\in \operatorname{CPTP},\forall\mathcal{M}\in \operatorname{EB},
\end{aligned}
\end{equation}
where $\mathcal{J}_{\mathcal{N}}$ is bipartite Choi states and $\mathcal{J}_{\mathcal{M}}$ is separable Choi states.

\section{Maximum fidelity between maximally entangled states and separable states}\label{sec:appdix_fc}
In this section, we show that the relationship between the geometric measure of entanglement $E_G(\Psi^+)$ and the fidelity.

As Ref.~\cite{chen2014comparison} mentioned,
\begin{equation}
\begin{aligned}
\langle \Psi^+|\sigma|\Psi^+\rangle &= \sum_ip_i\langle \Psi^+|(\sigma_i^{(A)}\otimes\sigma_i^{(B)})|\Psi^+\rangle,\\
&\le \max_{i}\langle \Psi^+|(\sigma_i^{(A)}\otimes\sigma_i^{(B)})|\Psi^+\rangle.
\end{aligned}
\end{equation}
Any mixed separable state can be written as a convex combination of pure product states, and the maximum value is attained by pure product state $\sigma=\ket{a}\bra{a}\otimes\ket{b}\bra{b}$.

We can define the maximizing the fidelity between the input state and set of pure product states as following Ref.~\cite{chen2014comparison}:
\begin{equation}
\Lambda_{\mathrm{m}}^{2}(\rho):=\max _{\sigma \in \operatorname{SEP}} \operatorname{Tr}(\rho \sigma)=\max _{|\varphi\rangle \in \operatorname{PRO}}\langle\varphi| \rho|\varphi\rangle=\max _{|\varphi\rangle \in \mathrm{PRO}} F^{2}(\rho,|\varphi\rangle).
\end{equation}
Then we have 
\begin{equation}
\max_{\sigma\in SEP}\bra{\Psi^+}\sigma\ket{\Psi^+} = \max_{\substack{
  \ket{a} \in \mathcal{H}_A \\
  \ket{b} \in \mathcal{H}_B
}}|\langle{\Psi^+}|{a\otimes b}\rangle|^2.
\end{equation}
Consider normalized pure states 
\begin{equation}
    \ket{a}=\sum_i^{d}\alpha_i\ket{i}_A , \ket{b}=\sum_i^{d}\beta_i\ket{i}_B,
\end{equation}
where $\sum_i|\alpha_i|^2=\sum_i|\beta_i|^2=1$.
Then
\begin{equation}
\begin{aligned}
\langle{\Psi^+}|{a\otimes b}\rangle &= \langle \frac{1}{\sqrt{d}}\sum_i^d\ket{ii}|(\sum_{j=1}^d\alpha_j\ket{j}\otimes\sum_{k=1}^{d}\beta_k\ket{k})\rangle,\\
&=\frac{1}{\sqrt{d}}\sum_i^{d}\alpha_i\beta_i.
\end{aligned}
\end{equation}
Thus
\begin{equation}
\begin{aligned}
|\langle{\Psi^+}|{a\otimes b}\rangle|^2 &= \frac{1}{d}  |\sum_i^{d}\alpha_i\beta_i|^2 ,\\
&\le \frac{1}{d}(\sum_{i=1}^d|\alpha_i|^2)(\sum_{i=1}^d|\beta_i|^2),\\
&=\frac{1}{d}.
\end{aligned}
\end{equation}
The inequality holds is due to Cauchy-Schwarz inequality.

\section{The proof of $\operatorname{Tr}[AB]\le\lambda_{\max}(A)\operatorname{Tr}[B]$}\label{sec:appdix_fd}
Suppose $A$ is a Hermitian matrix and $B$ is a positive semi-definite matrix. Since $B$ is a positive semi-definite operator, it admits a spectral decomposition
\begin{equation}
     B = \sum_{i} \mu_i |u_i\rangle\langle u_i|,
\end{equation}
where $\mu_i \ge 0$ are the eigenvalues of $B$, and $\{|u_i\rangle\}$ is a corresponding orthonormal basis of eigenvectors. Thus, we can get the trace of $B$ is the sum of its eigenvalues $\operatorname{Tr}[B] = \sum_{i} \mu_i$.
Now we compute :
\begin{equation}
\begin{aligned}
    \operatorname{Tr}[AB] &= \operatorname{Tr}\left[A \left(\sum_{i} \mu_i |u_i\rangle\langle u_i|\right)\right],\\
    &= \sum_{i} \mu_i \operatorname{Tr}[A|u_i\rangle\langle u_i|],\\
    &= \sum_{i} \mu_i \langle u_i|A|u_i\rangle.
\end{aligned}
\end{equation}
Here, $\langle u_i|A|u_i\rangle$ is the expectation value of the operator $A$ in the state $|u_i\rangle$.
For any Hermitian operator $A$ and any normalized vector $|u\rangle$, its expectation value $\langle u|A|u\rangle$ is upper-bounded by the largest eigenvalue of $A$, denoted
\begin{equation}
    \langle u_i|A|u_i\rangle \le \lambda_{\max}(A).
\end{equation}
Thus, we can get 
\begin{equation}
\begin{aligned}
    \operatorname{Tr}[AB]&=\sum_{i} \mu_i \langle u_i|A|u_i\rangle,\\
    & \le \sum_{i} \mu_i \lambda_{\max}(A),\\
    &=\lambda_{\max}(A)\operatorname{Tr}[B].
\end{aligned}
\end{equation}
This completes the proof.
\end{document}